\documentclass[runningheads, envcountsame, a4paper]{llncs}
\usepackage{comment}
\usepackage{cite}

\usepackage{etoolbox}
\usepackage{parskip}
\def\@testdef #1#2#3{
  \def\reserved@a{#3}\expandafter \ifx \csname #1@#2\endcsname
  \reserved@a  \else
 \typeout{^^Jlabel #2 changed:^^J%
\meaning\reserved@a^^J
\expandafter\meaning\csname #1@#2\endcsname^^J}%
 \@tempswatrue \fi}
\usepackage[T1]{fontenc}
\usepackage{amssymb}
\usepackage{amsmath}
\usepackage{quantikz}
\usepackage{hyperref}
\usepackage{tikz}
\usetikzlibrary{calc}
\usetikzlibrary{arrows.meta}
\usepackage{subfigure}
\usepackage{textcomp}
\usepackage[capitalize]{cleveref}
\crefname{proof}{proof}{proofs}
\Crefname{proof}{Proof}{Proofs} 
\usepackage{paralist}
\usepackage{xcolor}
\usepackage{booktabs}
\usepackage{wrapfig}
\usepackage{marvosym}
\usepackage{graphicx}
\usepackage{empheq}
\usepackage{proof}
\usepackage{listings}
\usepackage{xcolor}
\usepackage{makecell}
\usepackage{tabularx}
\usepackage{soul}

\usepackage{braket}
\let\set=\sett

\usepackage{syntax}

\usetikzlibrary{arrows}
\usetikzlibrary{positioning}
\usetikzlibrary{graphs}
\usetikzlibrary{backgrounds}
\usepackage{forest}
\usepackage[linesnumbered,noend,ruled]{algorithm2e}

\usepackage{newtxtext}
\usepackage{newtxmath}
\newcommand{\showcomment}[1]{#1}
\newcommand{\ol}[1]{\showcomment{\textcolor{orange}{\ifmmode \text{[OL: #1]}\else [OL: #1] \fi}}}
\newcommand{\yfc}[1]{\showcomment{\textcolor{purple}{\ifmmode \text{[YFC: #1]}\else [YFC: #1] \fi}}}
\newcommand{\km}[1]{\showcomment{\textcolor{green}{\ifmmode \text{[KM: #1]}\else [KM: #1] \fi}}}
\newcommand{\wl}[1]{\showcomment{\textcolor{blue}{\ifmmode \text{[#1]}\else [#1] \fi}}}
\newcommand{\ja}[1]{\showcomment{\textcolor{red}{\ifmmode \text{[JA: #1]}\else [JA: #1] \fi}}}

\newcommand{\topof}[1]{\mathtt{top}(#1)}
\newcommand{\symof}[1]{\mathtt{amp}(#1)}
\newcommand{\chof}[1]{\mathtt{ch}(#1)}

\newcommand{\leftof}[1]{\mathtt{left}(#1)}
\newcommand{\rightof}[1]{\mathtt{right}(#1)}

\tikzset{st/.style={font=\ttfamily,shape=rectangle,rounded corners=.5em,draw=black,fill=gray!30,inner xsep=.2em,inner ysep=0em,text height=2.2ex,text depth=1.0ex}}
\tikzset{leaf/.style={font=\ttfamily,shape=rectangle,rounded corners=.5em,draw=black,fill=blue!20,inner xsep=.2em,inner ysep=0em,text height=2.2ex,text depth=1.0ex}}

\newcommand{\hide}[1]{}

\newcommand{\autoqq}[0]{\textsc{AutoQ~2.0}\xspace}

\newcommand{\autoqqq}[0]{\textsc{AutoQ}\xspace}

\newcommand{\Arity}[0]{\ifmmode \mathbf{Arity} \else \textbf{Arity} \fi}

\newcommand{\height}[0]{\mathrm{ht}}

\lstdefinelanguage{OpenQASM}{
  keywords=[1]{qubit, bit, gate, barrier, measure, reset},
  keywords=[2]{cx, ccx, x, h, cu},
  sensitive=true,
  morecomment=[l]{//},
  morestring=[b]',
}

\lstdefinestyle{OpenQASMStyle}{
  language=OpenQASM,
  basicstyle=\small\ttfamily,
  keywordstyle=[1]\color{blue},
  keywordstyle=[2]\color{green!60!black},
  commentstyle=\color{gray}\itshape,
  stringstyle=\color{red},
  numbers=left,
  numberstyle=\color{gray},
  stepnumber=1,
  numbersep=15pt,
  backgroundcolor=\color{gray!10},
  frame=single,
  framesep=2mm,
  rulecolor=\color{black},
  breaklines=true,
  showstringspaces=false,
}

\newcommand{\bluelab}[1]{\text{\small {\protect\tikz[baseline,node distance=2mm]{%
  \protect\node[shape=rectangle,rounded corners=.5em,draw=black,inner xsep=.3em,inner ysep=0em,text height=2.3ex,text depth=1.0ex,fill=blue!20] (nd) at (0,.64ex) {\hspace{.1em}\texttt{\upshape\strut$#1$}\hspace{.1em}\strut};%
  }}}}

\newcommand{\bool}[0]{\mathbb{B}}

\newcommand{\figBVcircuit}[0]{
\begin{wrapfigure}[12]{r}{0.4\textwidth}
    \vspace{-10mm}
    \centering
    \scalebox{0.7}{
    \begin{quantikz}
    \lstick{$\ket{s_1}$}\gategroup[wires=3,steps=7,style={rounded corners,fill=blue!10,draw opacity=0},background]{} && \ctrl{6} \gategroup[wires=7,steps=3,style={dashed,
    rounded corners,fill=blue!10,draw opacity=0},background]{} &&&& \\
    \lstick{$\ket{s_2}$}\gategroup[wires=1,steps=7,style={rounded corners,fill=blue!10,draw opacity=0},background]{} &&& \ctrl{5} &&&\\
    \lstick{$\ket{s_3}$}\gategroup[wires=1,steps=7,style={rounded corners,fill=blue!10,draw opacity=0},background]{} &&&& \ctrl{4} &&\\
    \lstick{$\ket{0}$} & \gate{H} & \ctrl{0} &&& \gate{H} & \\
    \lstick{$\ket{0}$} & \gate{H} && \ctrl{0} && \gate{H}  & \\
    \lstick{$\ket{0}$} & \gate{H} &&& \ctrl{0} &\gate{H}  &\\
    \lstick{$\ket{1}$} & \gate{H} & \targ{} & \targ{}  & \targ{} & \gate{H} &
    \end{quantikz}
    }
    \vspace{-2mm}
    \caption{3-qubit BV circuit.}
    \label{fig:bv-circuit}    
\end{wrapfigure}
}

\newcommand{\figMCToffoli}[0]{
\begin{wrapfigure}[12]{r}{0.45\textwidth}
 \vspace{-8mm}
\resizebox*{0.43\textwidth}{!}{
\begin{quantikz}
\lstick{$\ket{c_1}$} & \ctrl{5} & \qw & \qw & \qw & \qw & \qw & \qw & \qw & \ctrl{5} & \qw\\
\lstick{$\ket{c_2}$} & \ctrl{1} & \qw & \qw & \qw & \qw & \qw & \qw & \qw & \ctrl{1} & \qw\\
\lstick{$\ket{c_3}$} & \qw & \ctrl{3} & \qw & \qw & \qw & \qw & \qw & \ctrl{3} & \qw & \qw\\
\lstick{$\ket{c_4}$} & \qw & \qw & \ctrl{4} & \qw & \qw & \qw & \ctrl{4} & \qw & \qw & \qw\\
\lstick{$\ket{c_5}$} & \qw & \qw & \qw & \ctrl{3} & \qw & \ctrl{3} & \qw & \qw & \qw & \qw\\
\lstick{$\ket{0}$} & \targ{} & \ctrl{1} & \qw & \qw & \qw & \qw & \qw & \ctrl{1} & \targ{} & \qw\\
\lstick{$\ket{0}$} & \qw & \targ{} & \ctrl{1} & \qw & \qw & \qw & \ctrl{1} & \targ{} & \qw & \qw\\
\lstick{$\ket{0}$} & \qw & \qw & \targ{} & \ctrl{1} & \qw & \ctrl{1} & \targ{} & \qw & \qw & \qw\\
\lstick{$\ket{0}$} & \qw & \qw & \qw & \targ{} & \ctrl{1} & \targ{} & \qw & \qw & \qw & \qw\\
\lstick{$\ket{t}$} & \qw & \qw & \qw & \qw & \targ{} & \qw & \qw & \qw & \qw & \qw
\end{quantikz}
}
\caption{Multi-controlled Toffoli with 5 control qubits. Standard Toffoli gates have two $\bullet$ as controls and one $\oplus$ as the target.}
\label{fig:cccx}
\end{wrapfigure}
}

\usepackage{multicol}
\usepackage{multirow}
\usepackage{colortbl}
\usepackage{xcolor}

\usepackage{thmtools, thm-restate}
\InputIfFileExists{cutmargins.tex}{%
	\pdfpagesattr{/CropBox [125 82 490 688]} % LNCS page made even biger
}{%
}

\Crefname{algocf}{Algorithm}{Algorithms}

\def\orcidID#1{\smash{\href{http://orcid.org/#1}{\protect\raisebox{-1.25pt}{\protect\includegraphics[alt={ORCID logo}]{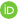}}}}}

\title{A Practical Specification Language for Automatic Quantum Program Verification (Technical Report)}

\author{Wei-Lun Tsai\inst{1,2}\orcidID{0009-0003-5832-0867}\Letter \and
Yu-Fang Chen\inst{1}\orcidID{0000-0003-2872-0336}\Letter \and
Ond\v{r}ej Leng\'{a}l\inst{3}\orcidID{0000-0002-3038-5875}\Letter}
\authorrunning{W.-L. Tsai, Y.-F. Chen, and O. Leng\'{a}l}
\institute{Institute of Information Science, Academia Sinica, Taipei, Taiwan \and
Graduate Institute of Electronics Engineering, National Taiwan University, Taipei, Taiwan \and
Faculty of Information Technology, Brno University of Technology, Brno, Czech Republic\\
\email{alan23273850@gmail.com, gulu0724@gmail.com, lengal@fit.vutbr.cz}}

\begin{document}

\maketitle
% \vspace{-10mm}
\begin{abstract}
Hoare-style verification provides a principled foundation for reasoning about the correctness of quantum programs, but existing approaches do not allow fully automatic verification. While automata-based verification scales well when specifications are given directly as automata, prior frameworks incur exponential blow-up when translating high-level set-based assertions into automata, which severely limits practicality. We introduce an extended set-based specification language and a specification-to-automata translation algorithm whose complexity is linear in the number of qubits, enabled by controlled automaton construction and qubit reordering. The resulting compact automata enable fully automatic Hoare-style verification of fixed-qubit quantum programs at previously infeasible scales, while substantially improving expressiveness without compromising efficiency.
\end{abstract}

%%%%%%%%%%%%%%%%%%%%%%%%%%%%%%%%%%%%%%%%%%%%%%
\section{Introduction}\label{sec:introduction}

Recent advances in quantum hardware, exemplified by early demonstrations targeting quantum supremacy \cite{arute2019quantum} and evidence of quantum utility on processors exceeding 100 qubits \cite{DBLP:journals/nature/KimEAWBRNWZTK23}, have accelerated the development of quantum programming. This progress is shifting the field from predominantly theoretical studies toward practical applications in cryptography \cite{DBLP:conf/focs/Shor94}, finance \cite{DBLP:journals/quantum/StamatopoulosES20}, and optimization \cite{harrigan2021quantum}. As quantum programs grow in size and complexity, ensuring their correctness becomes increasingly critical.

Despite significant progress in quantum program logics and verification frameworks in the last decade~\cite{DBLP:journals/toplas/Ying11,DBLP:conf/pldi/ZhouYY19,Coecke2011,DBLP:conf/esop/CharetonBBPV21,DBLP:journals/corr/abs-1805-06908,pldi23,DBLP:conf/cav/ChenCLLT23,DBLP:journals/pacmpl/AbdullaCCHLLLT25,DBLP:conf/tacas/ChenCHHLLT25,cacm25,DBLP:conf/cade/ChenRT23,iccad24}, a gap remains between theory and practice. In particular, most existing approaches do not provide a practical specification language that allows users to state nontrivial correctness properties and have them verified fully automatically, without interactive proof construction. Bridging this gap is essential for making formal verification a usable component of the quantum software development workflow.
\emph{Quantum Hoare logic} (QHL) is a widely used framework for reasoning about the correctness of quantum programs. In QHL, program behavior is specified using triples $\{P\}\,C\,\{Q\}$, where the precondition $P$ and postcondition $Q$ are represented as \emph{Hermitian operators} over the underlying Hilbert space \cite{DBLP:journals/toplas/Ying11}. In the past, a variety of quantum Hoare-style systems have been developed~\cite{DBLP:journals/toplas/Ying11,DBLP:conf/pldi/ZhouYY19,DBLP:journals/tqc/FengY21,DBLP:conf/lics/Unruh19,sundaram2025hoaremeetsheisenberglightweight,DBLP:conf/ecoop/0002ZCNLC024,yu2025logicapproximatequantitativereasoning,10.1145/3770083,DBLP:journals/corr/abs-2109-02198,DBLP:conf/cav/YanJY24,DBLP:journals/pacmpl/YanJY22,DBLP:conf/qsw/LewisZS24}, providing sound semantic foundations for formal reasoning about program correctness.

However, existing quantum Hoare-style approaches do not yet support practical push-button verification. Writing assertions that capture nontrivial correctness properties typically leads to complex proof obligations that require substantial manual effort in interactive theorem provers. For example, even for Grover's search algorithm~\cite{Grover96}, formal verification for an arbitrary number of qubits has only been achieved through large, hand-crafted proof developments in systems such as Isabelle/HOL and Coq, comprising hundreds to thousands of lines of proof scripts \cite{DBLP:conf/cav/LiuZWYLLYZ19,DBLP:journals/pacmpl/ZhouBSLY23}.
The limitations of existing quantum Hoare-style verification frameworks suggest that the choice of assertion representation plays a decisive role in enabling automation. In this work, we advocate assertions based on \emph{sets of quantum states} as a powerful and natural alternative.

Set-based assertions provide a natural and expressive way to specify correctness properties of quantum programs. Unlike traditional assertion representations, they can directly describe families of quantum states that satisfy explicit quantitative constraints. For example, the postcondition of Grover's algorithm can be specified as the set
$
\bigcup_{|\alpha|^2\ >\ 0.8}\Big\{ \alpha\ket w + \beta\sum_{\substack{i\neq w}}\ket i \Big\}$
for some $n \ge 2$~\cite{younes2008strengthweaknessgroversquantum} and marked item $w\in\set{0,1}^n$. This specification directly captures the intended probabilistic guarantee without resorting to indirect encodings or symbolic reasoning about amplitudes.
Such specifications are particularly well-suited to automated verification: correctness properties are stated as concrete constraints over sets of states, which in principle can be manipulated algorithmically. Building on this idea, \textsc{AutoQ}~\cite{DBLP:conf/cav/ChenCLLT23} represents set-based assertions using finite automata and SMT constraints, and performs Hoare-style verification through algorithmic transformations over these representations. Once the specification is given, the verification process proceeds without interactive theorem proving.

However, the practical applicability of this approach is limited by the cost of translating specifications into automata. In prior work, the size of the generated automata can grow exponentially with respect to the number of qubits. This exponential blow-up severely restricts verification to very small instances, despite the conceptual suitability of set-based assertions for automation.
In this work, we address this scalability bottleneck by introducing a new specification-to-automata translation algorithm whose complexity is linear in the qubit number. Before building automata, it reorganizes the specification into a tensor product of smaller, mostly independent components (first at the variable level, then at the qubit level), and only then constructs and composes the corresponding automata. As a result, extended set-based specifications can be compiled into compact automata even for nontrivial cases, enabling fully automatic verification at a scale that was previously infeasible.

We implemented the new translation algorithm and compared it against the method in~\cite{DBLP:conf/cav/ChenCLLT23}. The results show that our algorithm substantially outperforms the prior approach. For instance, we translate the functional-correctness specification of a 32-qubit Grover circuit into an automaton in under one second, whereas~\cite{DBLP:conf/cav/ChenCLLT23} does not complete the translation within five minutes.

\paragraph{Related Work.}
Within quantum Hoare logic, two principal representations of predicates have emerged. The first represents assertions as Hermitian operators, following the work of D'Hondt and Panangaden \cite{DBLP:journals/mscs/DHondtP06} and further developed by Ying \cite{DBLP:journals/toplas/Ying11}. This formulation enables quantitative reasoning over mixed states, supporting properties such as success probabilities and expected values. The second represents assertions as projections, or closed subspaces of Hilbert spaces, rooted in the quantum logic of Birkhoff and von Neumann \cite{BirkhoffNeumann1936} and later applied to quantum Hoare logic for qualitative reasoning \cite{DBLP:conf/pldi/ZhouYY19}.
While both representations are mathematically well-founded, they pose challenges for automation. The implementation is often via interactive theorem provers, and the proof requires significant manual efforts~\cite{DBLP:conf/cav/LiuZWYLLYZ19,DBLP:journals/pacmpl/ZhouBSLY23}.

Set-based assertions have recently been explored as an alternative specification mechanism, aiming to better support automation by treating correctness properties as explicit sets of quantum states. Prior work has shown that such assertions can be verified algorithmically when given directly as automata representations \cite{DBLP:conf/cav/ChenCLLT23,pldi23,cacm25,DBLP:journals/pacmpl/AbdullaCCHLLLT25,popl26}. However, existing approaches suffer from severe scalability issues when translating high-level specifications into automata, often incurring exponential blow-up.

\section{Background}\label{sec:Background}
%In this section, we provide the background necessary to understand this work.

\subsection{Quantum Computing}\label{sec:Quantum States in Trees}
In quantum computing, an $n$-qubit \emph{quantum state} is a superposition of all $2^n$ computational basis states:
$\ket{\psi} = \sum_{i\in\{0,1\}^n} c_i \ket{i}$,
where each complex \emph{amplitude} $c_i$ is associated with the basis state $\ket{i}$ and satisfies the normalization condition
$\sum_i |c_i|^2 = 1$.
For example,
$\tfrac12\ket{00} - \tfrac12\ket{01} + \tfrac{i}{\sqrt{2}}\ket{10}$
is a valid two-qubit state.

A quantum state can be viewed structurally as a \emph{perfect binary tree} of height~$n$.
The $k$-th level corresponds to the $k$-th qubit, and each basis string in $\{0,1\}^n$ determines a unique root-to-leaf path, taking the left branch for $0$ and the right branch for $1$.
The leaf reached by this path stores the amplitude of the corresponding basis state.
See \Cref{fig:stateA,fig:stateB} for illustrations.

To compose quantum systems, let
$\ket{\psi} = \sum_{s\in\{0,1\}^n} a_s\ket{s}$ be an $n$-qubit state and
$\ket{\phi} = \sum_{t\in\{0,1\}^m} b_t\ket{t}$ an $m$-qubit state.
Their tensor product is the $(n+m)$-qubit state
$\ket{\psi}\otimes\ket{\phi}
= \sum_{s\in\{0,1\}^n,t\in\{0,1\}^m} a_s b_t \ket{st}$,
where $\ket{st}$ denotes concatenation of basis strings.
In the tree representation, this corresponds to replacing each leaf labeled $a_s$ in the tree of $\ket{\psi}$ by a copy of the tree of $\ket{\phi}$, scaled by $a_s$.
We extend this operation elementwise to sets of quantum states.
For sets $S_1$ and $S_2$, define
$S_1 \otimes S_2
= \{\ket{\psi}\otimes\ket{\phi} \mid \ket{\psi}\in S_1,\ \ket{\phi}\in S_2\}.$

Finally, quantum computation proceeds by applying \emph{quantum gates}, which are unitary operators mapping quantum states to quantum states while preserving normalization.
Quantum gates are the fundamental building blocks of \emph{quantum circuits}.

\subsection{Level-Synchronized Tree Automata}\label{sec:Level-Synchronized Tree Automata}

We must choose an automata model as the target formalism for specification translation.
We adopt \emph{Level-Synchronized Tree Automata} (LSTA)~\cite{DBLP:journals/pacmpl/AbdullaCCHLLLT25}.
Compared with standard tree automata~\cite{cacm25,pldi23}, LSTAs provide a more compact encoding of quantum states and are directly supported by the verification tool AutoQ~\cite{DBLP:conf/cav/ChenCLLT23,DBLP:conf/tacas/ChenCHHLLT25}.
By translating specifications into LSTAs, we can directly reuse AutoQ's decision procedures and obtain a fully automatic, end-to-end verification workflow.
Thus, LSTAs are not only expressive and succinct, but also practically well suited for tool-supported verification.
As discussed in \Cref{sec:Quantum States in Trees}, a quantum state can be viewed structurally as a perfect binary tree.
LSTAs compactly represent \emph{sets} of such trees by sharing common substructures.
This section introduces the formal definition of LSTAs, their semantics, and supported operations.

\subsubsection{Definition.}
Let $\mathbb K$ be a commutative nonunital semiring (i.e., a structure closed under addition and multiplication, equipped with an additive identity $0_\mathbb K$ that is absorbing for multiplication). A \emph{level-synchronized tree automaton (LSTA)}~\cite{DBLP:journals/pacmpl/AbdullaCCHLLLT25} over $\mathbb K$ is a tuple $\mathcal{A} = \langle Q, V, \Delta, r\rangle_\mathbb K$ where \(Q\) is a set of \emph{states}, \(V\) is a set of \emph{variables}, \(\Delta\) is a set of \emph{transitions}, and \(r \in Q\) is the \emph{root state} (or \emph{starting state}). We often omit the subscript $\mathbb K$ when the semiring is clear from the context. The transition set \(\Delta\) is divided into two disjoint nonempty subsets: \(\Delta_\textit{in}\) (\emph{internal transitions}) and \(\Delta_\textit{ex}\) (\emph{external or leaf transitions}).

An internal transition \(\delta \in \Delta_\textit{in}\) has the form $q \xrightarrow{C} (q_1, q_2),$ and a leaf transition \(\delta \in \Delta_\textit{ex}\) has the form $q \xrightarrow{C} e,$ where \(q, q_1, q_2 \in Q\), \(e \in \mathbb K[V]\) (i.e., a polynomial over \(V\) with coefficients in $\mathbb K$), and \(C \subseteq \mathbb{N}\) is a nonempty finite set of \emph{choices}. We refer to \(q\), \(q_1\), \(q_2\), \(e\), and \(C\) as the \emph{top state}, \emph{left child}, \emph{right child}, \emph{amplitude}, and \emph{choices} of the transition~\(\delta\), and denote them as \(\topof{\delta}\), \(\leftof{\delta}\), \(\rightof{\delta}\), \(\symof{\delta}\), and \(\chof{\delta}\), respectively. 

We define \emph{root transitions} as $\Delta_r = \{ \delta \in \Delta_\textit{in} \mid \topof{\delta} = r \}$. To ensure deterministic resolution of transitions, LSTAs satisfy
\emph{choice disjointness}:
for any two distinct transitions $\delta_1,\delta_2$ with the same top state,
their choice sets are disjoint.
The size of an LSTA $\mathcal{A}$, denoted $|\mathcal{A}|$, is defined as $|\Delta|$.

\subsubsection{Sets of Quantum States.}
In this work, an LSTA is used as an internal representation of a set of quantum states.
Intuitively, an LSTA $\mathcal{A}$ encodes a set $\mathcal{L}(\mathcal{A})$ of perfect binary trees, each corresponding to a
quantum state as described in \Cref{sec:Quantum States in Trees}.
A quantum state
$\ket{\psi}=\sum_{s\in\{0,1\}^n} a_s\ket{s}$
belongs to $\mathcal{L}(\mathcal{A})$ if and only if there exists a sequence of choices
$c_1,\dots,c_n,c_0\in\mathbb{N}$ that induces a perfect binary tree whose leaf amplitudes
match the coefficients $a_s$.
Such a tree, if it exists, is unique by choice disjointness.

Formally, the sequence of choices must satisfy the following conditions.
\begin{enumerate}[(1)]
\item
The choices $c_1,\ldots,c_n,c_0$ induce a root-to-leaf path for every branch of the tree. For each basis string $s=b_1\ldots b_n\in\{0,1\}^n$, there exists a unique sequence of states
$q_0,q_1,\ldots,q_n$ and a value $v_s$ such that $q_0=r$ and:
\begin{inparaenum}
    \item for each internal level $1\le i\le n$, there exists a unique internal transition
    $\delta\in\Delta_{\textit{in}}$ with
    $\topof{\delta}=q_{i-1} \;\land\;c_i\in\chof{\delta}\;\land\; \mathit{ite}(b_i=0, \leftof{\delta}=q_i, \rightof{\delta}=q_i\bigr)$;
    \item at the leaf level, there exists a unique external transition
    $\delta\in\Delta_{\textit{ex}}$ with
    $\topof{\delta}=q_n
        \;\land\;
        c_0\in\chof{\delta}
        \;\land\;
        \symof{\delta}=v_s$.
    
\end{inparaenum}
A transition $\delta$ is said to be \emph{enabled} by a choice $c$ if $c\in\chof{\delta}$.
\item For all $s\in\{0,1\}^n$, the induced value satisfies $v_s=a_s$.
\end{enumerate}

\begin{comment}
Equivalently, this sequence must satisfy the following conditions:
\begin{enumerate}[(1)]
    \item \textbf{Structural Validity:} The choices $c_1, \ldots, c_n, c_0$ ensure the existence of a valid root-to-leaf path for every branch of the tree. Specifically, for every $s = b_1 \ldots b_n \in \{0,1\}^n$, there must exist a unique sequence of states $q_0, q_1, \ldots, q_n$ followed by an amplitude $v_s$ such that $q_0 = r$; for each internal level $1 \le i \le n$, there exists a unique internal transition $\delta \in \Delta_\textit{in}$ satisfying:
    \[
        \topof{\delta} = q_{i-1} \land c_i \in \chof{\delta} \land (\text{if } b_i=0 \text{ then } \leftof{\delta}=q_i \text{ else } \rightof{\delta}=q_i);
    \]
    and for the leaf level, there exists a unique external transition $\delta \in \Delta_\textit{ex}$ satisfying:
    \[
        \topof{\delta} = q_n \land c_0 \in \chof{\delta} \land \symof{\delta} = v_s. % \vert_\nu
    \]
    % (Here, $e\vert_\nu$ denotes the evaluation of the symbolic expression $e$ under the valuation $\nu$; and we say that a transition $\delta$ is \emph{enabled} by choice $c$ if $c \in \chof{\delta}$.)
    (In this context, we say that a transition $\delta$ is \emph{enabled} by choice $c$ if $c \in \chof{\delta}$.)
    \item \textbf{Amplitude Matching:} $v_s = a_s$ for all $s$.
\end{enumerate}
\end{comment}

\begin{example}
Consider a set of quantum states $\{\frac1{\sqrt2}\ket{00} - \frac1{\sqrt2}\ket{01}, \frac{i}{\sqrt2}\ket{10} - \frac{i}{\sqrt2}\ket{11}\}$ and an LSTA $\langle Q, \emptyset, \Delta_{\textit{in}}\cup\Delta_{\textit{ex}}, r\rangle,$ where \(Q = \{q_i \mid 0 \le i\le 8\}\), \(\Delta_{\textit{in}} = \{q_0 \xrightarrow{\{1\}} (q_1, q_2),\ q_0 \xrightarrow{\{2\}} (q_2, q_3),\ q_1 \xrightarrow{\{1\}} (q_4, q_5),\ q_2 \xrightarrow{\{1\}} (q_6, q_6),\ q_3 \xrightarrow{\{1\}} (q_7, q_8)\}, \Delta_{\textit{ex}} = \{\ q_4 \xrightarrow{\{1\}} \frac1{\sqrt2},\ q_5 \xrightarrow{\{1\}} \frac{-1}{\sqrt2},\ q_6 \xrightarrow{\{1\}} 0,\ q_7 \xrightarrow{\{1\}} \frac{i}{\sqrt2},\ q_8 \xrightarrow{\{1\}} \frac{-i}{\sqrt2}\}\), \(r = q_0\). In \Cref{fig:Representing a set of quantum states with an LSTA}, we depict the two quantum states as perfect binary trees and present their representative LSTA vertically for a better visualization of how it is used to induce the trees. All transitions whose choices include $1$ collectively form the tree in \Cref{fig:stateA}, whereas replacing the root transition with the alternative one yields the tree in \Cref{fig:stateB}. These are the only trees that can be induced by \Cref{fig:LSTA}.

%alt={A composite figure with three subfigures detailing the tree and automaton representations of two quantum states. Subfigure (a) shows a perfect binary tree for the state 1/sqrt(2)|00> minus 1/sqrt(2)|01>. The root node q0 has a left child q1 and a right child q2. Node q1 has a left child q4 (with a leaf amplitude of 1/sqrt(2)) and a right child q5 (with a leaf amplitude of -1/sqrt(2)). Node q2 has two children both named q6, each with a leaf amplitude of 0. Subfigure (b) shows a perfect binary tree for the state i/sqrt(2)|10> minus i/sqrt(2)|11>. The root node q0 has a left child q2 and a right child q3. Node q2 has two children both named q6, each with a leaf amplitude of 0. Node q3 has a left child q7 (with a leaf amplitude of i/sqrt(2)) and a right child q8 (with a leaf amplitude of -i/sqrt(2)). Subfigure (c) shows the LSTA representation combining these structures. From the root node q0, a transition with choice {1} leads to the pair (q1, q2), and a transition with choice {2} leads to the pair (q2, q3). At the next level, all transitions use choice {1}: q1 leads to (q4, q5), q2 leads to (q6, q6), and q3 leads to (q7, q8). At the leaf level, transitions with choice {1} map the nodes to their respective amplitudes: q4 maps to 1/sqrt(2), q5 maps to -1/sqrt(2), q6 maps to 0, q7 maps to i/sqrt(2), and q8 maps to -i/sqrt(2).}
\begin{figure}[tb]
\resizebox{\textwidth}{!}{
\subfigure[$\frac1{\sqrt2}\ket{00} - \frac1{\sqrt2}\ket{01}$]{
    \begin{tikzpicture}[anchor=base]
    \node {$q_0$} [sibling distance = 1.4cm, level distance = 0.8cm]
        child {node {$q_1$} edge from parent [solid, sibling distance = .7cm]
            child {node {$q_4$} edge from parent [solid]
                child {node {$\frac{1}{\sqrt2}$} edge from parent [solid]}
            }
            child {node {$q_5$} edge from parent [solid]
                child {node {$\frac{-1}{\sqrt2}$} edge from parent [solid]}
            }
        }
        child {node {$q_2$} edge from parent [solid, sibling distance = .7cm]
            child {node {$q_6$} edge from parent [solid]
                child {node {$0$} edge from parent [solid]}
            }
            child {node {$q_6$} edge from parent [solid]
                child {node {$0$} edge from parent [solid]}
            }
        };
    \end{tikzpicture}
    \label{fig:stateA}
}
\subfigure[$\frac{i}{\sqrt2}\ket{10} - \frac{i}{\sqrt2}\ket{11}$]{
    \begin{tikzpicture}[anchor=base]
    \node {$q_0$} [sibling distance = 1.4cm, level distance = 0.8cm]
        child {node {$q_2$} edge from parent [solid, sibling distance = .7cm]
            child {node {$q_6$} edge from parent [solid]
                child {node {$0$} edge from parent [solid]}
            }
            child {node {$q_6$} edge from parent [solid]
                child {node {$0$} edge from parent [solid]}
            }
        }
        child {node {$q_3$} edge from parent [solid, sibling distance = .7cm]
            child {node {$q_7$} edge from parent [solid]
                child {node {$\frac{i}{\sqrt2}$} edge from parent [solid]}
            }
            child {node {$q_8$} edge from parent [solid]
                child {node {$\frac{-i}{\sqrt2}$} edge from parent [solid]}
            }
        };
    \end{tikzpicture}
    \label{fig:stateB}
}
\subfigure[The LSTA representation]{
\resizebox{!}{0.55\height}{
    \begin{tikzpicture}[>=stealth',node distance=20mm,rotate=37]
  \pgfsetlinewidth{1bp}
  \tikzstyle{bddnode}=[draw,rectangle,rounded corners=2mm]
  \tikzstyle{bddleaf}=[]
  \tikzstyle{trans}=[->,>=stealth']
  \tikzstyle{translow}=[->,>=stealth']
  \tikzstyle{stick}=[-,>=stealth']
  \tikzstyle{hidtrans}=[]
  \tikzstyle{ark}=[]
  \tikzstyle{blueark}=[fill=white,opacity=1]
  \tikzstyle{redark}=[fill=red,opacity=0.6]

  \tikzstyle{outp}=[scale=0.75,fill=black!30,inner sep=0.6mm]

  \tikzstyle{bddnodex}=[bddnode,inner sep=1mm]

  % NODES
  \node[bddnodex] (p) {$q_0$};
  \node (pp) at ($(p)+(-0.7mm,-0.1mm)$) {};
  \node (ppp) at ($(p)+(+0.25mm,-0.6mm)$) {};
  \node[right of=p,xshift=-10mm] (root) {};
  \node[bddnodex,below left of=p,yshift=-5mm] (q+) {$q_1$};
  \node[bddnodex,below of=p,yshift=0.5mm] (q0) {$q_2$};
  \node[bddnodex,below right of=p,yshift=-5mm] (q-) {$q_3$};
  \node[bddnodex,below left of=q+,yshift=-5mm] (r1) {$q_4$};
  \node[bddnodex,below of=q+,yshift=0mm] (r2) {$q_5$};
  \node[bddnodex,below right of=q-,yshift=-5mm] (r4) {$q_8$};
  \node[bddnodex,below of=q-,yshift=0mm] (r3) {$q_7$};
  \node[bddnodex,below right of=q+,yshift=-5mm] (r0) {$q_6$};

  \node[bddleaf, below of=r1,yshift=11mm] (r1a0) {$\frac{1}{\sqrt{2}}$};
  \node[bddleaf, below of=r2,yshift=11mm] (r2a0) {$\frac{-1}{\sqrt{2}}$};
  \node[bddleaf, below of=r0,yshift=11mm] (r0a) {$0$};
  \node[bddleaf, below of=r3,yshift=11mm] (r3a0) {$\frac{i}{\sqrt{2}}$};
  \node[bddleaf, below of=r4,yshift=11mm] (r4a0) {$\frac{-i}{\sqrt{2}}$};

  \draw (p) coordinate[xshift=-5mm,yshift=-5mm] (pa);
  \draw (p) coordinate[xshift= 5mm,yshift=-5mm] (pb);

  \draw (q+) coordinate[xshift=0mm,yshift=-6mm] (q+a);
  \draw (q-) coordinate[xshift=0mm,yshift=-6mm] (q-a);

  \draw (q0) coordinate[xshift=0mm,yshift=-6mm] (q0a);

  %%%%%%%%%%%%%%% transition %%%%%%%%%%%%%%%%%%%
  \draw[trans,-] (pp)
    to
    (pa);
  \draw[trans,-] (ppp)
    to
    (pb);
  \draw[trans] (pa)
    to[bend right=10]
    node[pos=0,left,xshift=-1mm,yshift=1mm] {}
    coordinate[pos=0.45] (pa1)
    (q+);
  \draw[translow] (pa)
    to[bend left=10]
    coordinate[pos=0.45] (pa2)
    (q0);
  \draw[translow]
    (pb)
    to[bend right=10]
    coordinate[pos=0.45] (pb1)
    (q0);
  \draw[trans]
    (pb)
    to[bend left=10]
    coordinate[pos=0.45] (pb2)
    (q-);

  \node at (pa) [xshift=-1mm,yshift=-6mm] {$\{1\}$};
  \node at (pb) [xshift=+1mm,yshift=-6mm] {$\{2\}$};

  %%%%%%%%%%%%%%% transition %%%%%%%%%%%%%%%%%%%
  \draw[translow] (q+a)
    to[bend right]
    coordinate[pos=0.5] (q+a1)
    (r1);

  \draw[trans] (q+) to 
    node[pos=0.9,right,xshift=0mm,yshift=1mm] {}
    (q+a)
    to
    coordinate[pos=0.6] (q+a2)
    (r2);

  \node at (q+a) [xshift=-5mm,yshift=-5mm] {$\{1\}$};
  \node at (q-a) [xshift=+5mm,yshift=-5mm] {$\{1\}$};

  \draw[translow] (q-a) to 
    node[pos=0.9,left,xshift=0mm,yshift=1mm] {}
    (q-a)
    to
    coordinate[pos=0.6] (q-a1)
    (r3);

  \draw[trans] (q-) to
    (q-a)
    to[bend left]
    coordinate[pos=0.5] (q-a2)
    (r4);

  %%%%%%%%%%%%%%% transition %%%%%%%%%%%%%%%%%%%
  \draw[translow] (q0a)
    to[bend right=50]
    coordinate[pos=0.6] (q1a1)
    ([xshift=-1.8mm, yshift=+2mm]r0);

  \draw[trans] (q0) to 
    node[pos=0.9,left,xshift=0mm,yshift=1mm] {}
    (q0a)
    to[bend left=50]
    coordinate[pos=0.6] (q1a2)
    ([xshift=+2.3mm, yshift=-1.1mm]r0);

  \node at (q0a) [xshift=0mm,yshift=-5mm] {$\{1\}$};

  %%%%%%%%%%%%%%% transition %%%%%%%%%%%%%%%%%%%
  \draw[trans] (root) to (p);
  \draw[stick] (r1) to node[left,xshift=-1mm] {$\{1\}$} (r1a0);
  \draw[stick] (r2) to node[left,xshift=-1mm] {$\{1\}$} (r2a0);
  \draw[stick] (r0) to node[left, xshift=-1mm,yshift=0mm] {$\{1\}$} (r0a);
  \draw[stick] (r3) to node[left,xshift=-1mm] {$\{1\}$} (r3a0);
  \draw[stick] (r4) to node[left,xshift=-1mm] {$\{1\}$} (r4a0);
\end{tikzpicture}
}
    \label{fig:LSTA}
}
}
\caption{Representing a set of quantum states with an LSTA}
\label{fig:Representing a set of quantum states with an LSTA}
\end{figure}
\end{example}

\subsubsection{Binary Operations.}
In this work, the \emph{set union} and \emph{tensor product} operations on LSTAs, which correspond to the semantics of language operations, serve as the fundamental building blocks for constructing the target LSTA. The upper bounds on the sizes of the constructed LSTAs, which are keys to the desired complexity, are summarized in the following theorem (see the \hyperlink{pfBinaryOperationComplexity}{proof} in \Cref{app:bin}).

\begin{restatable}{theorem}{binaryOperationComplexity}\label{theorem:Binary Operations}
Given two LSTAs $\mathcal{A}$ and $\mathcal{B}$ over $\mathbb{K}$ where $\mathcal L(\mathcal A)$ and $\mathcal L(\mathcal B)$ contain $n$-qubit and $m$-qubit states, respectively, there exists a set union operation (denoted by $\mathcal{A} \sqcup \mathcal{B}$) and a tensor product operation (denoted by $\mathcal{A} \otimes \mathcal{B}$), both yielding valid LSTAs over $\mathbb{K}$.
These operations satisfy the semantic properties $\mathcal{L}(\mathcal{A} \sqcup \mathcal{B}) = \mathcal{L}(\mathcal{A}) \cup \mathcal{L}(\mathcal{B})$ and $\mathcal{L}(\mathcal{A} \otimes \mathcal{B}) = \mathcal{L}(\mathcal{A}) \otimes \mathcal{L}(\mathcal{B})$.
Regarding the automaton size, the set union is bounded by $|\mathcal{A} \sqcup \mathcal{B}| \le |\mathcal{A}| + |\mathcal{B}|$ and the tensor product is bounded by $|\mathcal{A} \otimes \mathcal{B}| \le |\mathcal{A}| + N_{\text{leaves}}(\mathcal{A}) \cdot |\mathcal{B}|$, where $N_{\text{leaves}}(\mathcal{A})$ denotes the number of distinct amplitude values at the leaves of $\mathcal{A}$.
\end{restatable}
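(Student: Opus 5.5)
The plan is to give explicit constructions for $\sqcup$ and $\otimes$ and then check three things for each: choice disjointness, the language equality, and the size bound. Throughout, I assume that in an LSTA whose language consists of $n$-qubit states, states are stratified by level: a state used at depth $i$ is never used at another depth. In particular, the root $r$ never occurs as a child. If this fails, it can be enforced by the usual level-indexed copying, and I would argue that this copying does not affect the counts below; that argument still has to be checked.

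\textbf{Set union.} Here $n=m$, since union only makes sense for sets of states with the same number of qubits. First rename the states of $\mathcal B$ so that they are disjoint from those of $\mathcal A$, and let $k=\max\bigcup_{\delta\in\Delta_{r_{\mathcal A}}}\chof{\delta}$.
\begin{enumerate}[(1)]
\item Introduce a fresh root $r$.
\item Re-target every root transition of $\mathcal A$ to have top state $r$, keeping its choices.
\item Re-target every root transition of $\mathcal B$ to have top state $r$, with its choice set shifted by $+k$.
\item Keep all other transitions of both automata unchanged; the old roots become unreachable and are dropped.
\end{enumerate}
No transition is added, so $|\mathcal A\sqcup\mathcal B|\le|\mathcal A|+|\mathcal B|$.

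Choice disjointness at $r$ holds because of the shift. At every other state it is inherited, since the two state sets are disjoint.

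For the language, observe that the root choice $c_1$ enables a transition coming from exactly one of the two automata. By disjointness and stratification, every path in the induced tree then stays inside that automaton. So the choice sequence $c_1,\dots,c_n,c_0$, with $c_1$ shifted back by $k$ when it came from $\mathcal B$, induces exactly a tree of $\mathcal A$ or a tree of $\mathcal B$. Conversely, every tree of either automaton is induced in this way. This gives $\mathcal L(\mathcal A\sqcup\mathcal B)=\mathcal L(\mathcal A)\cup\mathcal L(\mathcal B)$.

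\textbf{Tensor product.} For each distinct leaf amplitude $a$ of $\mathcal A$, make a fresh copy $\mathcal B_a$ of $\mathcal B$ in which every leaf amplitude $e$ is replaced by $a\cdot e\in\mathbb K[V]$. This uses closure of $\mathbb K[V]$ under multiplication. Let $r_a$ be the root of $\mathcal B_a$.

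Next, normalize $\mathcal A$ so that every level-$n$ state $q$ carries a single leaf transition. Such a transition has a fixed amplitude $\mathrm{amp}(q)$ and choice set $C_q$. Then redirect every occurrence of $q$ as a child in a level-$(n-1)$ transition to $r_{\mathrm{amp}(q)}$, and delete the leaf transitions of $\mathcal A$. The result has at most $|\mathcal A|+N_{\text{leaves}}(\mathcal A)\cdot|\mathcal B|$ transitions.

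For the language: a choice sequence $c_1\dots c_n\,c'_1\dots c'_m\,c'_0$ induces, on the first $n$ levels, a tree of $\mathcal A$ with leaf values $a_s$. Below each such leaf it induces the same tree of $\mathcal B$, because the choices $c'$ are global, scaled by $a_s$. That is precisely $\ket\psi\otimes\ket\phi$ in the tree picture of \Cref{sec:Quantum States in Trees}. Choice disjointness is inherited from $\mathcal A$ and from each copy $\mathcal B_a$.

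\textbf{Main obstacle.} The delicate step is the normalization of $\mathcal A$ in the general case. There, a level-$n$ state $q$ may have several leaf transitions with different choice sets $C_j$ and amplitudes $a_j$, and the leaf choice $c_0$ of $\mathcal A$ is a global parameter. After merging, that choice must still be recorded somewhere.

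My first attempt would be to fold $c_0$ into the root choice of $\mathcal B$. Concretely, $q$ gets transitions $q\xrightarrow{\{\langle c,c'\rangle\,:\,c\in C_j,\ c'\in\chof{\delta}\}}(\leftof\delta,\rightof\delta)$, one for each $j$ and each root transition $\delta$ of $\mathcal B_{a_j}$, where $\langle\cdot,\cdot\rangle$ is an injective pairing $\mathbb N^2\to\mathbb N$. This preserves both disjointness and semantics. However, it costs up to $|\Delta_{r}(\mathcal B)|$ extra transitions per leaf transition of $\mathcal A$.

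To recover the stated bound, I would instead argue that the leaf choices of $\mathcal A$ can be pushed upward before the product is formed, so that each level-$n$ state carries a single amplitude. This is the standard shape produced by the translation, and the size accounting would then match the one given above.
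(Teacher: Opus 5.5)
\textbf{Verdict: there is a genuine gap, in the tensor-product part.}

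Your set-union construction is essentially the paper's and is fine. The paper also adds a fresh root and re-attaches the root transitions of both automata with disjoint choices, re-indexing them as $\{1\},\dots,\{k\}$ instead of shifting. Two small details: shift by $k+1$ if $0\in\mathbb N$, and the restriction $n=m$ is not needed.

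\textbf{Your tensor construction is unsound.} You redirect every child occurrence of a level-$n$ state $q$ to $r_{\symof{q}}$ and delete the leaf transitions of $\mathcal A$. This fails even under your own normalization. Deleting those transitions erases the requirement that one global leaf choice $c_0$ enable all leaf transitions reached in the $\mathcal A$-tree at the same time. Consider $r\xrightarrow{\{1\}}(q,q')$, $q\xrightarrow{\{1\}}a$, $q'\xrightarrow{\{2\}}b$:
\begin{itemize}
\item every leaf state has exactly one leaf transition, and $\mathcal L(\mathcal A)=\emptyset$;
\item after redirection, $r\xrightarrow{\{1\}}(r_a,r_b)$ accepts $a\ket{0}\ket{\phi}+b\ket{1}\ket{\phi}$ for every $\ket\phi\in\mathcal L(\mathcal B)$.
\end{itemize}
Redirection is sound only under an extra hypothesis, such as ``all leaf transitions carry the same choice set''.

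The normalization meant to reach that shape (``push leaf choices upward'') is never constructed. It is also not free: moving $c_0$ into $c_n$ forces you to split each level-$(n-1)$ transition according to which leaf transitions of its two children are jointly enabled. So for arbitrary $\mathcal A$ your argument establishes neither the language equation nor the size bound.

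\textbf{The paper's route is exactly your ``first attempt''.} It keeps the leaf states of $\mathcal A$ as interface states. For each leaf transition $q\xrightarrow{C_1}\alpha$ of $\mathcal A$ and each root transition $r_{\mathcal B}\xrightarrow{C_2}(q_\ell,q_r)$ of $\mathcal B$, it adds $q\xrightarrow{f(C_1\times C_2)}(q_\ell^\alpha,q_r^\alpha)$, with $f$ injective into $\mathbb N\setminus U^{\mathcal A}_{\mathit{in}}$. This stores $c_0$ inside the level-$(n+1)$ choice. Injectivity of $f$ together with the avoided codomain gives choice disjointness without any stratification assumption. The two inclusions then follow by splitting and re-merging choice sequences at level $n+1$. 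Adopting this closes the validity and semantic parts of your proof.

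\textbf{On size, your concern is justified and the paper does not resolve it.} Its proof of the theorem only cites the two semantic lemmas. Its construction has $|\Delta^{\mathcal A}_{\mathit{in}}|+|\Delta^{\mathcal A}_{\mathit{ex}}|\cdot|\Delta^{\mathcal B}_r|+N_{\text{leaves}}(\mathcal A)\cdot(|\mathcal B|-|\Delta^{\mathcal B}_r|)$ transitions. This stays within $|\mathcal A|+N_{\text{leaves}}(\mathcal A)\cdot|\mathcal B|$ when, for example, $|\Delta^{\mathcal B}_r|=1$, or each leaf amplitude of $\mathcal A$ sits on a single leaf transition. Otherwise it can exceed the bound: four leaf transitions of $\mathcal A$ sharing one amplitude, with two root transitions in $\mathcal B$, give an excess of $2$.

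To prove the stated bound you therefore need one of the following:
\begin{itemize}
\item a hypothesis of the kind just described;
\item a restriction to stratified automata whose leaf transitions all use $\{1\}$, the shape produced by the compact construction and preserved by $\sqcup$ and $\otimes$; there your redirection is sound and even yields $|\Delta^{\mathcal A}_{\mathit{in}}|+N_{\text{leaves}}(\mathcal A)\cdot|\mathcal B|$;
\item a weaker size bound.
\end{itemize}
Appealing to ``the standard shape produced by the translation'' is not a proof of the general statement.
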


%%%%%%%%%%%%%%%%%%%%%%%%%%%%%%%%%%%%%%%%%%%%%%%%%%%%%%%%%%%%%%%%%%
\section{Specification Language}\label{sec:Specification Language}
\newcommand{\expr}[0]{\mathit{expr}}
\newcommand{\tset}[0]{\mathit{tset}}
\newcommand{\pset}[0]{\mathit{pset}}
\newcommand{\uset}[0]{\mathit{uset}}
\newcommand{\setq}[0]{\mathit{set}}
\newcommand{\setP}[0]{\mathit{setP}}
\newcommand{\setV}[0]{\mathit{setV}}
\newcommand{\setQ}[0]{\mathit{setQ}}
\newcommand{\cpxcons}[0]{\mathit{CCons}}
\newcommand{\pint}[0]{\mathit{N}}
\newcommand{\diracs}[0]{\mathit{diracs}}
\newcommand{\varcons}[0]{\mathit{varcons}}
\newcommand{\dirac}[0]{\mathit{dirac}}
\newcommand{\term}[0]{\mathit{term}}
\newcommand{\termV}[0]{\mathit{termV}}
\newcommand{\complexq}[0]{\mathit{C}}
\newcommand{\varcon}[0]{\mathit{varcon}}
\newcommand{\var}[0]{\mathit{V}}
\newcommand{\ineqs}[0]{\mathit{ineqs}}
\newcommand{\ineq}[0]{\mathit{ineq}}
\newcommand{\cstr}[0]{\mathit{CStr}}
\newcommand{\vstr}[0]{\mathit{VStr}}

This section formally introduces the specification language used to describe \textit{assertions} (i.e., sets of quantum states), which serve as preconditions and postconditions in our verification framework. Our design aligns with the Dirac notation and standard set representation, ensuring familiarity and expressiveness. The syntax is shown in \cref{fig:syntax}, with $\expr$ as the start symbol. We define two disjoint sets of variable names for later use: $V_s$ for binary string variables (used in basis states) and $V_c$ for complex variables. 

%alt={A set of Backus-Naur Form syntax rules for the specification language. The grammar is divided into two logical parts. The first part defines set-level expressions: 'expr' constructs unions of 'tset' components; 'tset' constructs tensor products of 'pset' components; 'pset' allows tensor powers of 'uset'; 'uset' constructs set unions of 'set' components; and 'set' encapsulates quantum states called 'diracs', optionally conditioned on constraints called 'varcons'. The second part defines the quantum states: 'diracs' is a comma-separated list of 'dirac' terms; 'dirac' is a linear combination of 'term' constructs; 'term' represents a quantum basis state or a superposition over constraints with an amplitude alpha; 'varcons' is a comma-separated list of 'varcon' conditions; and 'varcon' specifies length constraints, variable equalities, or variable inequalities.}
\begin{figure}
    \begin{minipage}{0.44\textwidth}
    \begin{align*}
    \expr \Coloneqq\ & \tset \mid \bigcup_{\cpxcons} \tset\\
    \tset \Coloneqq\ & \pset \mid \tset \otimes \pset\\
    \pset \Coloneqq\ & \uset \mid \uset^\pint\\
    \uset \Coloneqq\ & \setq \mid \uset \cup \setq\\
    \setq \Coloneqq\ & \set{\diracs} \mid \set{\diracs : \varcons}
    \end{align*}
    \end{minipage}
    \begin{minipage}{0.55\textwidth}
    \begin{align*}
    \diracs \Coloneqq\ & \dirac \mid \dirac,\ \diracs\\
    \dirac \Coloneqq\ & \term \mid \dirac + \term\\
    \term \Coloneqq\ & \alpha \ket{\vstr} \mid \alpha \sum_{\varcons}\ \ket{\vstr}\\
    \varcons \Coloneqq\ & \varcon \mid \varcon,\ \varcons\\
    \varcon \Coloneqq\ & |\var|=\pint \mid \var\ne\var \mid \var\ne\cstr \mid \var=\cstr
    \end{align*}
    \end{minipage}
    \caption{Syntax of the specification language. The grammar comprises six terminal categories: natural number $\pint \in \mathbb{N}$; constant binary string $\cstr \in \{0,1\}^+$; complex polynomial $\alpha \in \mathbb{C}[V_c]$; binary string variable $\var \in V_s$; constraint $\cpxcons$, defined as a quantifier-free first-order formula in nonlinear arithmetic over the real part and imaginary part of variables in $V_c$; and basis string pattern $\vstr \in (\{0,1\}\cup V_s \cup \overline{V_s})^+$, where $\overline{V_s} = \{\overline{v} \mid v\in V_s\}$ collects bit-complemented variables.}
    \label{fig:syntax}
\end{figure}

The grammar is designed to possess three key features for quantum program verification:
\begin{itemize}
    \item \textbf{Modular Construction:} Two fundamental operations allow for the construction of complex sets from simpler components: the standard set union ($\cup$) to aggregate states and the tensor product ($\otimes$) to compose quantum subsystems. Our syntax assigns $\otimes$ the lowest precedence among all other operators except $\bigcup_{\cpxcons}$ to make $\otimes$ act as a natural structural delimiter that separates distinct subsystems in our algorithm. For instance, $A \cup B \otimes C$ is interpreted as $(A \cup B) \otimes C$.
    \item \textbf{Symbolic Representation:} Beyond concrete values, the grammar admits symbolic variables in both amplitudes (via $V_c$) and basis states (via $V_s$). The variables in $V_c$ enable the specification of infinite sets of states. The variables in $V_s$ can be used along with the summation ($\sum$) to describe superpositions compactly. This allows a single $\term$ to represent a linear combination of exponentially many basis states without explicit enumeration. For instance, $\displaystyle\bigg\{\dfrac{1}{\sqrt7}\sum_{\substack{j\ne i}} \ket j : |i|=3\bigg\}$ represents a set of 
    normalized uniform superpositions where exactly one basis state $\ket{i}$ is excluded from the full basis.
    \item \textbf{Constraint-Based Specification:} The terminal $\cpxcons$ and nonterminal $\varcons$ enable precise control over the valid state space by applying constraints to variables. For instance, $\displaystyle\bigcup_{|\alpha|^2 > 0.8}\Big\{ \alpha\ket w + \beta\sum_{\substack{i\neq w}}\ket i \Big\}$ for some $n \ge 2$ and marked item $w\in\set{0,1}^n$ can be used as a postcondition of Grover's algorithm.
\end{itemize}

In addition, the grammar provides specific constructs to facilitate compact specifications. We denote the $N$-fold tensor product of a set $S$ as $S^N$, inductively defined by $S^1 = S$ and $S^N = S^{N-1} \otimes S$. This is particularly useful for describing uniform registers, such as an $N$-qubit zero state $\set{\ket{0}}^N$. Furthermore, the constraint $|\var|=\pint$ is used to explicitly define the domain of a variable $\var$ as $\{0,1\}^\pint$. Example usage is provided in \Cref{sec:usecases}. % Ironically, our algorithm terminates if assertions are not tensor aligned, so in practice, we still usually write out the whole string instead of in the tensor product. For now, I cannot think of a scenario where the tensor product is necessary. One possible improvement of the implementation is to merely expand tensor products, but it is not quite easy...

Syntactic correctness does not guarantee semantic validity. To be considered \emph{well-formed}, an \emph{assertion} generated by the grammar must satisfy the following:
\begin{enumerate}[(1)]
    \item \textbf{Unambiguous Variable Length:} For every binary string variable $\var$ present in a $\vstr$, its \emph{length} $\pint$ (i.e., the number of qubits) must be uniquely determined, either explicitly via the constraint $|\var|=\pint$ or implicitly inferred from constraints such as $\var\ne V'$, $\var\ne\cstr$, or $\var=\cstr$, provided that $V'$ or $\cstr$ has a known length of $\pint$.
    \item \textbf{Length Consistency:} The number of qubits must be consistent across all quantum states within a single $\uset$, as well as between the operands of inequalities.
    \item \textbf{No Redundant Summation Variables:} Within each $\term$, every \emph{iterating} variable under the summation (if any) must be present in $\vstr$. Otherwise, the summation results in unintended amplitude scaling, which is undesirable. For instance, $\displaystyle\bigg\{\dfrac{1}{\sqrt7}\sum_{\substack{j\ne i}} \ket j : |i|=3\bigg\}$ has no redundant summation variables because $\vstr$ contains the only iterating variable $j$, but $\displaystyle\bigg\{\dfrac{1}{\sqrt7}\sum_{\substack{j\ne i,\ k\ne i,\ |\ell|=2}} \ket j : |i|=3\bigg\}$ has two redundant variables $k$ and $\ell$, which results in an unintended scaling factor of $(2^3-1)\times 2^2 = 28$.
\end{enumerate}

\section{Use Cases}
\label{sec:usecases}
%Below, we will provide examples of the verification problems and the corresponding specifications in our language.

\subsection{Oracle-Based Algorithms}\label{sec:oracle}

\figBVcircuit

An \emph{oracle circuit} is a black-box circuit that encodes a function and
enables quantum algorithms to query information in a single step. In
Grover's search~\cite{Grover96}, the oracle implements
$f(x)\colon \bool^n \to \bool$, returning $1$ on the marked solution $x$ and $0$
otherwise; in Bernstein--Vazirani (BV)~\cite{BernsteinV93}, it encodes a secret
bit string.
To verify such algorithms for \emph{all} oracles, we use a \emph{parameterized}
oracle circuit whose behavior is determined by input qubits via controlled
gates, and compose it with the circuit under verification.

For BV, the composed circuit has $7$ qubits (\cref{fig:bv-circuit}): the
highlighted block $\bluelab{\ \ }$ is the oracle, and the rest is the
implementation. We treat the secret as part of the input: qubits $s_1,s_2,s_3$
parameterize the oracle, the remaining 0-qubits are workspace, and the last qubit
is an ancilla. We prove correctness by showing
$\{\ket{s0001} : |s|=3\} \Rightarrow \{\ket{ss1} : |s|=3\}$, i.e., for every
secret string $s$, the BV circuit outputs $s$.

\subsection{Amplitude Amplification}\label{sec:amplitude}

When verifying an amplitude amplification algorithm (such as
Grover's search~\cite{Grover96}), we can use variables~$v_h$ and~$v_\ell$ as
amplitudes and describe the relation between the variables before and after the circuit evolution using a global constraint.
For a Grover iteration circuit where the marked state is $\ket{111}$ on the first three qubits and the remaining three qubits serve as ancilla, we verify its correctness using the following specification.
\[
\scalebox{0.9}{$\displaystyle
\bigcup_{\substack{\text{imag}(v_h)=0,\ \text{real}(v_h)>0,\\\text{imag}(v_\ell)=0,\ \text{real}(v_\ell)>0,\\7v_\ell > v_h}}\{v_h\ket{111001}+v_\ell\sum_{\substack{i \ne 111}}\ket{i001}\} \Rightarrow \bigcup_{\substack{\text{imag}(v'_h)=0,\\\text{imag}(v'_\ell)=0,\\|v'_h| > |v_h|}}
\{v'_h\ket{111001}+v'_\ell\sum_{\substack{i \ne 111}}\ket{i001}\}
$},
\]
where $P \Rightarrow Q$ means $P$ is the \emph{precondition} and $Q$ is the \emph{postcondition}.
The language also allows us to specify the property that a complete Grover's circuit has ${>}80\,\%$ probability of finding the marked state as follows.
\[\{\ket{000001}\} \Rightarrow 
\bigcup_{|v_h|^2\,>\,0.8}\{v_h\ket{111001}+v_\ell\sum_{\substack{i \ne 111}}\ket{i001}\}\]
 
\subsection{Compound Multi-Control Quantum Gates}\label{sec:multi-control}

\figMCToffoli

Quantum hardware typically supports only a limited gate set, so implementing an
unsupported gate often requires decomposing it into a sequence of native gates.
For example, an $n$-controlled Toffoli gate is usually realized using standard
Toffoli gates. In \cref{fig:cccx}, the qubits
$c_1,\,\ldots,\,c_5$ are controls, the intermediate $\ket{0}$ registers
are ancillas, and the final qubit $\ket t$ is the target. We verify correctness
against the following specifications:\vspace{3pt}
\scalebox{0.83}{
\begin{tabular}{r @{\;} c @{\;} l}
    $\{\ket{c00000} : c=11111\}$ & $\Rightarrow$ & $\{\ket{c00001} : c=11111\}$ \\
    $\{\ket{c00001} : c=11111\}$ & $\Rightarrow$ & $\{\ket{c00000} : c=11111\}$ \\
    $\{\ket{c00000} : c\neq11111\}$ & $\Rightarrow$ & $\{\ket{c00000} : c\neq11111\}$ \\
    $\{\ket{c00001} : c\neq11111\}$ & $\Rightarrow$ & $\{\ket{c00001} : c\neq11111\}$
\end{tabular}
}

\section{From Specification to Automata}\label{sec:From Specification to Automata}

\subsection{Overview}\label{sec:overview}
This section details the algorithm that translates a list of input assertions into their corresponding automata, with qubits reordered strategically. The procedure employs a divide-and-conquer strategy to reduce construction complexity. First, we preprocess the input to \textbf{ensure the variable-boundary-aligned representation} (\cref{sec:preprocessing}). Next, we perform a two-stage transformation: a \textbf{high-level variable reordering} (\cref{sec:variablereorderingandtensorproducttransformation}) followed by a \textbf{low-level qubit reordering} (\cref{sec:qubitorderingandtensorproducttransformation}). These steps aim to transform assertions into tensor products of smaller set components with qubits reordered, constituting a key contribution of this work. Finally, we \textbf{construct compact automata} for these components and recompose them using set union and tensor product operations to yield the final automaton (\cref{sec:Compact-LSTA-Construction}).

\subsection{Ensuring Variable-Boundary-Aligned Representation}\label{sec:preprocessing}

To facilitate the core translation algorithm, we first process the raw input assertions into variable-boundary-aligned representations through four steps.

\begin{enumerate}
    \item \hypertarget{Canonicalization}{\textbf{Canonicalization (Syntactic Sugar Elimination and Variable Renaming):}} To streamline the core algorithm steps, we canonicalize the input representation in advance. We first eliminate two forms of syntactic sugar through rewriting. Specifically, a tensor power $\uset^\pint$ is expanded into the $N$-fold tensor product $\underbrace{\uset \otimes \cdots \otimes \uset}_{\pint}$, and a $\setq$ of the form $\set{\dirac_1,\ \dirac_2,\ \ldots,\ \dirac_k : \varcons}$ is split into the $k$-fold union $\set{\dirac_1 : \varcons} \cup \set{\dirac_2 : \varcons} \cup \cdots \cup \set{\dirac_k : \varcons}$. This rule applies analogously to sets without $\varcons$. After that, we perform alpha-renaming to ensure that distinct variables are assigned unique names, thereby streamlining the dependency analysis in \Cref{sec:variablereorderingandtensorproducttransformation}. These transformations yield a form that is free of syntactic sugar and ensures variable uniqueness.
    
    \item \textbf{Tensor Alignment Check:} We verify that all assertions after canonicalization contain the same number of tensor product operators ($\otimes$). If this check passes, we further decompose each assertion into a sequence of $\uset$ constructs delimited by these operators. By defining the collection of the $i$-th $\uset$ from all assertions as the $i$-th \emph{tensor segment}, we then verify that within each segment, all quantum states possess the same qubit length. \emph{Any mismatch in segment count or qubit length terminates the process.}

    \item \textbf{Variable Alignment Check:} We verify that the boundaries of all variables are aligned. Operationally, this is implemented by checking that any two variables occupy either disjoint or identical qubit intervals. \emph{Any violation aborts the process.}

    \item \textbf{Constant Abstraction:} Upon passing the alignment checks, we compute the \emph{global partition}, a set of disjoint qubit intervals that partitions the range $[1,\ n+1)$ where $n$ denotes the number of qubits in all assertions. The partition respects the boundaries of all variables, which means each interval $[a, b)$ occupied by a variable contributes exactly one element to the partition. To maintain structural consistency, any constant binary string spanning multiple intervals is sliced to match these boundaries. Each resulting slice is then abstracted into a fresh variable $V$ bound to its corresponding constant value $\cstr$ via inserted equalities $\var=\cstr$ under the summation, ensuring that every rewritten $\vstr$ aligns with the global partition at the variable level.
\end{enumerate}

Following this preprocessing phase, the original $\setq$ constructs are transformed into restricted $\setP$ constructs, in which every $\vstr$ is free of constant binary digits.

\begin{example}[Preprocessing Pipeline]
Consider an input of two assertions.
\[
\begin{aligned}
\mathcal{E}_1 &= \set{\ \ket{i\, 0\, 0}\ :\ |i|=2\ } \otimes \set{\ket{0}} \cup \set{\ket{1}}^2 \\
\mathcal{E}_2 &= \set{\ \ket{00\, 0\, i},\ \ket{11\, 1\, i}\ :\ |i|=1\ } \otimes \set{\ket{0}} \otimes \set{\ket{0}}
\end{aligned}
\]

\textbf{Step 1 (Canonicalization):}
We rewrite $\mathcal{E}_1$ to eliminate the tensor power and rewrite $\mathcal{E}_2$ to eliminate the comma-separated list. Additionally, we resolve the naming conflict by alpha-renaming the variable $i$ in $\mathcal{E}_2$ to $j$ in the first term and $k$ in the second term.
\[
\begin{aligned}
\mathcal{E}_1' &= \underbrace{\set{\ \ket{i\, 0\, 0}\ :\ |i|=2\ }}_{\text{Segment 1}} \otimes \underbrace{\set{\ket{0}} \cup \set{\ket{1}}}_{\text{Segment 2}} \otimes \underbrace{\set{\ket{0}} \cup \set{\ket{1}}}_{\text{Segment 3}} \\
\mathcal{E}_2' &= \underbrace{\set{\ \ket{00\, 0\, j}\ :\ |j|=1\ } \cup \set{\ \ket{11\, 1\, k}\ :\ |k|=1\ }}_{\text{Segment 1}} \otimes \underbrace{\set{\ket{0}}}_{\text{Segment 2}} \otimes \underbrace{\set{\ket{0}}}_{\text{Segment 3}}
\end{aligned}
\]

\textbf{Step 2 (Tensor Alignment Check):} 
We decompose both canonicalized assertions into segments delimited by tensor product operators. Since $\mathcal{E}_1'$ and $\mathcal{E}_2'$ both have three segments, the segment counts pass. We proceed to the qubit lengths. In Segment 1, $\mathcal{E}_1'$ terms have length $2+1+1=4$ and $\mathcal{E}_2'$ term has length $3+1=4$. In Segments 2 and 3, both $\mathcal{E}_1'$ terms and the $\mathcal{E}_2'$ term have length $1$, so the qubit lengths also pass.

\textbf{Step 3 (Variable Alignment Check):}
We check the variable boundaries for Segment 1. In $\mathcal{E}_1'$, variable $i$ occupies $[1, 3)$. In $\mathcal{E}_2'$, the renamed variables $j$ and $k$ occupy $[4, 5)$. Since all intervals are either disjoint or identical, the variable boundaries are consistent across all segments and hence pass this check.

\textbf{Step 4 (Constant Abstraction):}
The global partition $\mathcal{I}$ across all three segments is $\set{\underbrace{\overbrace{[1, 3)}^{\text{Slot 1}}, \overbrace{[3, 4)}^{\text{Slot 2}}, \overbrace{[4, 5)}^{\text{Slot 3}}}_{\text{Segment 1}}, \underbrace{\overbrace{[5, 6)}^{\text{Slot 4}}}_{\text{Segment 2}}, \underbrace{\overbrace{[6, 7)}^{\text{Slot 5}}}_{\text{Segment 3}}}$. We slice and abstract constants to match $\mathcal{I}$.
\[
\begin{aligned}
\mathcal{E}_1'' = \;& \underbrace{\set{\ \sum_{\substack{a=0,\\b=0}}\ket{iab}:|i|=2\ }}_{\text{Segment 1}} \otimes \underbrace{\set{\ \sum_{g=0}\ket{g}\ } \cup \set{\ \sum_{h=1}\ket{h}\ }}_{\text{Segment 2}} \otimes \underbrace{\set{\ \sum_{m=0}\ket{m}\ } \cup \set{\ \sum_{n=1}\ket{n}\ }}_{\text{Segment 3}}\\
\mathcal{E}_2'' = \;& \underbrace{\set{\ \sum_{\substack{c=00,\\d=0}}\ket{cdj}:|j|=1\ } \cup \set{\ \sum_{\substack{e=11,\\f=1}}\ket{efk}:|k|=1\ }}_{\text{Segment 1}} \otimes \underbrace{\set{\ \sum_{\ell=0}\ket{\ell}\ }}_{\text{Segment 2}} \otimes \underbrace{\set{\ \sum_{p=0}\ket{p}\ }}_{\text{Segment 3}}
\end{aligned}
\]

The final rewritten assertions $\mathcal{E}_1''$ and $\mathcal{E}_2''$ consist purely of variable-boundary-aligned terms, ready for transformation. In this aligned form, we refer to each interval in the global partition as a \emph{slot} to highlight the intuition that each slot can be occupied by exactly one variable. In this case, there are five slots numbered from 1 to 5 in this specification, where Segment 1 occupies the first three slots, Segment 2 occupies the fourth slot, and Segment 3 occupies the fifth slot.
\end{example}

\subsection{Variable-Level Reordering and Tensor Product Transformation}\label{sec:variablereorderingandtensorproducttransformation}
This section functions as a high-level complexity reduction strategy. It analyzes variable-level dependencies among slots. Two slots are considered dependent if they are occupied by the same variable or by variables constrained by an inequality. 
Based on this dependency analysis, the algorithm partitions all slots into disjoint independent subsets. It then transforms each $\setP$ construct in the assertion into a tensor product of smaller ones according to the partition. This is the first key technique for cost reduction.

This strategy works best when slots are mutually independent. In this case, the transformation reduces the construction complexity from exponential to linear in the number of variables. In the following example, since the two slots on the left-hand side are independent, this phase transforms the expression as follows:

\[
\Bigg\{\sum_{\substack{|i|=5,\ |j|=5}} \ket{ij} \Bigg\}
\;\longrightarrow\;
\Bigg\{\sum_{|i|=5} \ket{i} \Bigg\} \otimes \Bigg\{\sum_{|j|=5} \ket{j} \Bigg\}
\]
This allows subsequent steps to process $i$ and $j$ separately, reducing the enumeration count from $2^{|i|+|j|}$ ($1024$ states) to $2^{|i|} + 2^{|j|}$ ($32+32=64$ states). Conversely, such a reduction is unachievable when all slots are mutually dependent. In either case, the complexity is guaranteed not to increase. % I am afraid that simply reducing a factor in complexity analysis is not enough to guarantee this.

\hypertarget{slot reordering}{\subsubsection{Slot Reordering.}}\label{sec:slot-reordering}
We model the slot dependencies using an undirected graph $G = (\mathcal{K}, E)$, where $\mathcal{K}$ consists of slot indices. An edge $(i, j) \in E$ (with $i \ne j$) exists if and only if there exist variables $u$ at slot $i$ and $v$ at slot $j$ that satisfy the \textbf{recurrence condition} (i.e., refer to the same variable instance) or the \textbf{inequality condition} (i.e., are constrained by the inequality $u \ne v$). After building this graph, we compute the connected components. To ensure a deterministic transformation structure, we first arrange these components in ascending order of their minimum slot indices. Then, we arrange the vertices in each component into a list and sort each list in ascending order. Finally, we obtain a new total slot order by concatenating these ordered lists.

\begin{example}[Slot Reordering]\label{ex:variable-level-one}
Consider two $\setP$ constructs $S_A$ and $S_B$ within a tensor segment, sharing the same 7-slot structure.
\[
S_A = \Bigg\{\, \alpha_{1A}\sum_{\substack{|a|=1,\ |b|=1, \\ |c|=2,\ |d|=2, \\ e=0,\ a\ne b}}\ket{abcxwde} + \alpha_{2A}\sum_{\substack{|f|=1,\ |g|=1, \\ |h|=2,\ |j|=2, \\ |k|=1}}\ket{fghijwk} : \substack{|i|=1, \\ |w|=2, \\ |x|=1} \,\Bigg\},
\]
\[
S_B = \Bigg\{\, \alpha_{1B}\sum_{\substack{|l|=1,\ |q|=2,\\|m|=2,\ |n|=1, \\ y=0,\ p\ne q}}\ket{lupyqmn} + \alpha_{2B}\sum_{\substack{|o|=1,\ |r|=1, \\ |s|=2,\ |t|=1, \\ |v|=2}}\ket{orstzvu} : \substack{|p|=2, \\ |u|=1, \\ |z|=2, \\ p\ne z} \,\Bigg\}.
\]

We construct the dependency graph for slot indices $\{1, 2, \dots, 7\}$. In $S_A$, the recurrence condition arises from variable $w$ occupying slot 5 in the first term and slot 6 in the second term. This creates edge $(5, 6)$. The inequality condition arises from $a \ne b$ in the first term, which creates edge $(1, 2)$. In $S_B$, the recurrence condition arises from variable $u$ occupying slot 2 in the first term and slot 7 in the second term. This creates edge $(2, 7)$. The inequality condition arises from $p \ne z$ in the set predicate and $p \ne q$ in the first term, which both create edge $(3, 5)$. The resulting dependency graph and its connected components are illustrated in \Cref{fig:slotdependencygraph}. The resulting new total slot order is therefore $\big[[1,2,7], [3,5,6], [4]\big]$.

%alt={A slot dependency graph containing seven nodes numbered 1 to 7, divided into three disjoint connected components enclosed in dashed boxes. The first component contains nodes 1, 2, and 7. Node 1 is connected to node 2 with a dashed red line labeled "a not equal to b" indicating an inequality constraint. Node 2 is connected to node 7 with a solid blue line labeled "var u" indicating a recurrence dependency. The second component contains nodes 3, 5, and 6. Node 3 is connected to node 5 with one dashed red line with two labels: "p not equal to z" above the line and "p not equal to q" below the line, representing inequality constraints. Node 5 is connected to node 6 with a solid blue line labeled "var w", representing a recurrence dependency. The third component contains only node 4 with no connections.}
\begin{figure}[ht]
    \centering
    \begin{tikzpicture}[
        node distance=1cm,
        slot/.style={circle, draw=black, thick, minimum size=8mm, font=\bfseries},
        recurrence/.style={-, thick, blue!80!black},
        inequality/.style={-, thick, dashed, red!80!black},
        group box/.style={draw=gray, rounded corners, dashed, inner sep=5pt}
    ]

        \node[slot] (s1) {1};
        \node[slot] (s2) [right=of s1] {2};
        \node[slot] (s7) [right=of s2] {7};

        \node[slot] (s3) [right=1cm of s7] {3};
        \node[slot] (s5) [right=of s3] {5};
        \node[slot] (s6) [right=of s5] {6};

        \node[slot] (s4) [right=1cm of s6] {4};

        \draw[inequality] (s1) -- node[above, red] {$a\ne b$} (s2);
        \draw[recurrence] (s2) -- node[above, blue] {var $u$} (s7);

        \draw[inequality] (s3) -- node[above, red] {$p\ne z$} (s5);
        \draw[recurrence] (s5) -- node[above, blue] {var $w$} (s6);
        \draw[inequality] (s3) -- node[below, red] {$p\ne q$} (s5);

        \begin{pgfonlayer}{background}
            \node[group box, fit=(s1)(s2)(s7), label=below:\textcolor{gray}{CC$_1$}] {};
            \node[group box, fit=(s3)(s5)(s6), label=below:\textcolor{gray}{CC$_2$}] {};
            \node[group box, fit=(s4), label=below:\textcolor{gray}{CC$_3$}] {};
        \end{pgfonlayer}
    \end{tikzpicture}
    \caption{Dependency graph for slot indices based on $S_A$ and $S_B$. Solid blue lines indicate recurrence dependencies (shared variables), and dashed red lines indicate inequality constraints. The graph reveals three disjoint connected components. In each component, the numbers are arranged in ascending order. In the whole graph, the components are arranged in ascending order of their minimum elements.}
    \label{fig:slotdependencygraph}
\end{figure}
\end{example}

\subsubsection{Tensor Product Transformation.}
Assume a $\setP$ construct $S$ occupies the variable slots reordered by concatenating $k$ lists $L_1, \ldots, L_k$, each corresponding to a connected component from \hyperlink{slot reordering}{slot reordering}. Based on this new slot order, we transform $S$ into a tensor product structure $S_1 \otimes S_2 \otimes \dots \otimes S_k$, where each $S_i$ is obtained by projecting $S$ onto the slots in $L_i$. For clarity, we designate these $S_i$ components as $\setV$ constructs (extending the syntax in \Cref{fig:syntax}) to explicitly identify them as intermediate outcomes of the variable-level transformation. To facilitate this transformation and prevent undesired recombinations of projected terms\footnote{Projected terms are of the form $\tau_m \ket{\vstr}$ or $\tau_m \sum_{\varcons}\ \ket{\vstr}$.} originating from distinct source $\term$ constructs during the final composition, we introduce \emph{tag amplitudes} $\mathcal T = \set{\tau_0,\ \ldots,\ \tau_t}$\footnotemark, where $t$ is the number of $\term$ constructs in $S$. Here, $\tau_0$ is the additive identity and is absorbing for multiplication. These tags satisfy the idempotence and orthogonality properties:
\[
\tau_m \cdot \tau_n =
\begin{cases}
\tau_m & \text{if } m = n, \\
\tau_0 & \text{otherwise.}
\end{cases}
\]
\footnotetext{The domain $\set{\sum_{m\in T} \tau_m}_{T \in 2^{\mathcal T\setminus\set{\tau_0}} \setminus \set{\emptyset}} \cup \set{\tau_0}$, equipped with the defined binary operations, forms a commutative nonunital semiring.}

With tag amplitudes\hide{this definition}, we construct each $\setV$ construct $S_i$ as follows.
\begin{enumerate}
    \item \textbf{VStr Extraction:} For each $\term$ construct, we derive a new string $\vstr2$ from $\vstr$ by selecting the variables located at the slot indices in the order specified by $L_i$, and then replace the original $\vstr$ with this new $\vstr2$.
    \item \textbf{Constraint Filtering:} Regardless of whether constraints appear under the summation or within the set predicate, we retain only those involving the variables present in $\vstr2$. Irrelevant constraints are discarded accordingly.    
    \item \textbf{Amplitude Replacement:} The amplitude $\alpha_m$ of the $m$-th $\term$ is replaced by the tag $\tau_m$\hide{extended version $T(m)$}. This replacement uniquely tags the term, ensuring only the projected terms originating from the same source term are recombined during the tensor product.
\end{enumerate}

The resulting set $S_i$ comprises $t$ instances of a variant construct, denoted as $\termV$. Extending the syntax in \Cref{fig:syntax}, $\termV$ is structurally identical to $\term$, except that the complex amplitude is replaced by a tag amplitude.

\begin{example}[Tensor Product Transformation]\label{ex:variable-level-two} Recall in \Cref{ex:variable-level-one} that the resulting new total slot order is $\big[[1,2,7], [3,5,6], [4]\big]$ and both sets ($S_A$ and $S_B$) are transformed according to this order. For instance, $S_B$ is transformed into $S_{B,1} \otimes S_{B,2} \otimes S_{B,3}$:

\[
S_{B,1} = \Bigg\{\, \tau_1\sum_{\substack{|l|=1, \\ |n|=1}}\ket{lun} + \tau_2\sum_{\substack{|o|=1, \\ |r|=1}}\ket{oru} : |u|=1 \,\Bigg\},
\]
\[
S_{B,2} = \Bigg\{\, \tau_1\sum_{\substack{|q|=2, \\ |m|=2, \\ p\ne q}}\ket{pqm} + \tau_2\sum_{\substack{|s|=2, \\ |v|=2}}\ket{szv} : \substack{|p|=2, \\ |z|=2, \\ p\ne z} \,\Bigg\},\
% \]
% \[
S_{B,3} = \Bigg\{\, \tau_1\sum_{y=0}\ket{y} + \tau_2\sum_{|t|=1}\ket{t} \,\Bigg\}.
\]
% For instance, $S_A$ is transformed into $S_{A,1} \otimes S_{A,2} \otimes S_{A,3}$:
% \[
% S_{A,1} = \Bigg\{\, T(1)\sum_{\substack{|a|=2,\ |b|=2, \\ e=00,\ a\ne b}}\ket{abe} + T(2)\sum_{\substack{|f|=2,\ |g|=2, \\ |k|=2}}\ket{fgk} \,\Bigg\},
% \]
% \[
% S_{A,2} = \Bigg\{\, T(1)\sum_{\substack{|c|=1, \\ |d|=1}}\ket{cwd} + T(2)\sum_{\substack{|h|=1, \\ |j|=1}}\ket{hjw} : |w|=1 \,\Bigg\},
% \]
% \[
% S_{A,3} = \Bigg\{\, T(1)\ket{x} + T(2)\ket{i} : |x|=1,\ |i|=1 \,\Bigg\}.
% \]
\end{example}

\subsection{Qubit-Level Reordering and Tensor Product Transformation}\label{sec:qubitorderingandtensorproducttransformation}
Unlike the preceding variable-level transformation, which treats variables as atomic units, inequalities such as $i \neq j$ can always be resolved at the qubit level via the logical disjunction of $i_k \neq j_k$ across all qubits $k$ (i.e., $i_1\neq j_1 \vee \ldots \vee i_\ell\neq j_\ell$). This observation enables us to further decompose the dependency structure: distinct qubits can be handled in separate constructs, linked only by the accumulated satisfaction status of constraints. By expanding variables into individual qubits, this phase effectively reduces the time complexity from exponential to linear in the number of qubits.

Let $\ell$ denote the qubit length of the variables in a $\setV$ construct $S$ derived from the previous phase. We expand each multi-qubit variable $v$ into a sequence of single-qubit variables $v_1 v_2 \dots v_\ell$. Based on this expansion, we transform $S$ into a qubit-level tensor product structure $S_1 \otimes S_2 \otimes \dots \otimes S_\ell$, where each $S_k$ governs the $k$-th qubit slice. We refer to each component $S_k$ as a $\setQ$ construct (an internal structure not covered in \Cref{fig:syntax}), which utilizes a specialized amplitude form to track the partial satisfaction of constraints (e.g., $i \neq j \iff \bigvee_k (i_k \neq j_k)$).

\subsubsection*{Valuation-Dependent Amplitudes.}
To implement the disjunctive logic while maintaining the accumulated satisfaction status, we introduce \hypertarget{def:valuation-dependent amplitudes}{\emph{valuation-dependent amplitudes}}.
Consider a $\setV$ construct containing $t$ instances of $\termV$. For the $m$-th $\termV$, let $\Phi_m$ denote the set of inequality constraints collected from both the local summation within $\termV$ and the global set predicate of $\setV$. We define the valuation-dependent amplitude as a collection of boolean functions. Formally, let $d$ be a set defined as $d \triangleq \set{f_m}_{m \in T}$ for some subset of term indices $T \subseteq [t]$, where each element $f_m: \Phi_m \to \mathbb{B}$ maps the constraints in $\Phi_m$ to truth values\footnotemark.

The algebraic operations on $d_1 \triangleq \set{f^1_m}_{m \in T_1}$ and $d_2 \triangleq \set{f^2_m}_{m \in T_2}$ are defined as follows:
\begin{itemize}
    \item \textbf{Sum:} $d_1 + d_2 \triangleq \{ f_m \}_{m \in T_1 \cup T_2}$, where $f_m \triangleq f^1_m$ if $m \in T_1$ and $f^2_m$ if $m \in T_2$. It is well-defined because in this work, this operation is applied only when $T_1 \cap T_2 = \emptyset$.
    \item \textbf{Product:} $d_1 \cdot d_2 \triangleq \set{f_m}_{m \in T_1 \cap T_2}$, where $f_m(\phi) \triangleq f^1_m(\phi) \lor f^2_m(\phi), \quad \forall \phi \in \Phi_m$.
\end{itemize}
\footnotetext{The amplitude domain consists of all partial mappings $d = \set{f_m}_{m \in T}$ from subsets of term indices $T \subseteq [t]$ to boolean functions. By extending the sum operation for overlapping indices $m \in T_1 \cap T_2$ as the pointwise disjunction $f_m(\phi) \triangleq f^1_m(\phi) \lor f^2_m(\phi)$, this domain, equipped with the defined binary operations, forms a commutative nonunital semiring.}

With these definitions, we construct each $\setQ$ construct $S_k$ from $S$ as follows:
\begin{enumerate}
    \item \textbf{Qubit Projection and Constraint Relaxation:} We project all variables in $S$ onto their $k$-th qubits (adding subscript $k$). Initially, we relax all constraints, meaning that all single-qubit variables are freely instantiated to values in $\{0, 1\}$.
    \item \textbf{Constraint Evaluation:} The previously disregarded constraints are then integrated into the valuation-dependent amplitudes. Specifically, the original tag amplitude $\tau_m$ is replaced by the singleton set $\{f_m\}$, where the truth value of each constraint in $\Phi_m$ is determined locally by the current assignment of the qubit-level variables.
    \item \textbf{Concrete Expansion:} Finally, each quantum state in $\setQ$ is expanded into a summation of concrete basis states with these newly computed valuation-dependent amplitudes. These concrete expansions are then passed to the automaton construction procedure described in the next section.
\end{enumerate}

\begin{example}[Qubit-Level Construction for $S_{B,2}$ in \Cref{ex:variable-level-two}]\label{ex:Qubit-Level Construction}
Recall that
\[
S_{B,2} = \Bigg\{\, \tau_1\sum_{\substack{|q|=2, \\ |m|=2, \\ p\ne q}}\ket{pqm} + \tau_2\sum_{\substack{|s|=2, \\ |v|=2}}\ket{szv} : \substack{|p|=2, \\ |z|=2, \\ p\ne z} \,\Bigg\}.
\]
Since the qubit length is 2, we construct two qubit slices $S_{B,2}^{(1)}$ and $S_{B,2}^{(2)}$. The compact algebraic form\footnote{The concrete expansion is given in \Cref{ex:Concrete Expansion and LSTA Construction}.} of the $j$-th qubit slice ($j=1, 2$) is given by:
\[
S_{B,2}^{(j)} = \left\{
    \sum_{q_j, m_j} \big\{f_1^{V_1}\big\}\ket{p_j q_j m_j} 
    + 
    \sum_{s_j, v_j} \big\{f_2^{V_2}\big\}\ket{s_j z_j v_j}
    : p_j, z_j
\right\},
\]
where the constraint sets are $\Phi_1 \triangleq \{p\ne z,\ p\ne q\}$ and $\Phi_2 \triangleq \{p\ne z\}$.
For brevity, we denote the valuation functions as $f_m^{V_m}$, where $V_1$ and $V_2$ represent the boolean assignments to the variables $\{p_j, z_j, q_j, m_j\}$ and $\{p_j, z_j, s_j, v_j\}$, respectively. The function bodies are determined locally:
$f_1^{V_1}(p\ne z) \triangleq (p_j\ne z_j)$, 
$f_1^{V_1}(p\ne q) \triangleq (p_j\ne q_j)$, and 
$f_2^{V_2}(p\ne z) \triangleq (p_j\ne z_j)$.
\end{example}

\subsection{Compact LSTA Construction}\label{sec:Compact-LSTA-Construction}
For a set of quantum states in Dirac notation, we construct an LSTA for each quantum state and take the set union of the resulting LSTAs via the LSTA set union operation. Recalling that a quantum state can be represented by a perfect binary tree, we construct the LSTA directly mimicking this tree structure, with all transitions enabled by the singleton choice $\{1\}$. This compact construction holds for any valid amplitude domain $\mathbb{K}$ and yields an upper bound guarantee on the resulting automaton size by employing a bottom-up approach that merges isomorphic subtrees. The result is summarized in the following theorem (see the \hyperlink{pfAtomicSpaceComplexity}{proof} in \Cref{app:space}).

\begin{restatable}{theorem}{atomicSpaceComplexity}\label{thm:atomicSpaceComplexity}
Let $\ket{\psi} = \sum_{\substack{s \in \{0,1\}^n}} a_s\ket{s}$ be an $n$-qubit state. Let $N = |\ \{s \in \{0,1\}^n \mid a_s \neq 0_\mathbb{K}\}\ |$ denote the number of nonzero-amplitude terms. The size of the LSTA $\mathcal{A} = \langle Q, V, \Delta, r\rangle_\mathbb{K}$, constructed via the levelwise procedure detailed below, is bounded by $|\Delta| = O(N \cdot n)$. 
\end{restatable}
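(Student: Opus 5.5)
The plan is to make the ``levelwise procedure'' concrete and then count distinct subtrees level by level. We build the LSTA bottom-up over the perfect binary tree of $\ket{\psi}$. At depth $n$ (the leaves), create one state per distinct amplitude value occurring among the $a_s$. At each depth $k<n$, each node of the tree is identified by a prefix $p\in\{0,1\}^k$, and its subtree is determined by the pair of child states. We hash-cons these pairs: a new state is created with a single transition $q\xrightarrow{\{1\}}(q_1,q_2)$ only if the pair $(q_1,q_2)$ has not appeared at that depth before. Since every transition carries the singleton choice $\{1\}$ and each state has exactly one outgoing transition, choice disjointness holds trivially. Moreover, the unique induced tree is exactly the tree of $\ket{\psi}$, so $\mathcal L(\mathcal A)=\{\ket{\psi}\}$ for any valid $\mathbb K$. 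The number of transitions equals the number of states, so it suffices to bound the number of distinct subtrees per level by $O(N)$.

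The key observation is a zero-subtree argument. At depth $k$, call a prefix $p$ \emph{live} if some $s$ extending $p$ has $a_s\neq 0_{\mathbb K}$. There are at most $\min(2^k,N)\le N$ live prefixes, since each nonzero basis string has exactly one prefix of length $k$. Every non-live prefix roots the all-$0_{\mathbb K}$ subtree of height $n-k$, and all such prefixes share a single state after merging. Hence level $k$ has at most $N+1$ distinct states. The leaf level has at most $N+1$ distinct amplitudes: the nonzero ones plus $0_{\mathbb K}$.

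Summing over the $n+1$ levels gives at most $(n+1)(N+1)$ states, and therefore $|\Delta|\le (n+1)(N+1)=O(N\cdot n)$ for $N\ge1$. The degenerate case $N=0$ cannot arise for a genuine quantum state, and even then the bound is $O(n)$ and is absorbed under the convention $N\ge1$.

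The main step needing care is making sure the merging really identifies isomorphic subtrees, i.e.\ that two nodes at the same depth get the same state if and only if their subtrees are equal. I would prove this by induction from the leaves upward, using exact equality of amplitudes in $\mathbb K$ at the base. The counting itself only needs the ``if'' direction for the zero subtree, plus the fact that every node receives some state. So the only real obstacle is stating the construction precisely enough that the unique induced tree matches $\ket{\psi}$.
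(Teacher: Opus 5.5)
Your proof is correct and follows the paper's own count: at each of the $n+1$ levels there are at most $N$ states for live (active) prefixes plus one sink for the all-$0_{\mathbb K}$ subtree, so $|\Delta|\le (N+1)(n+1)=O(N\cdot n)$. The one difference is that the paper's procedure does not merge isomorphic nonzero subtrees; it creates one state $q_x$ per active prefix and one per nonzero leaf $s$. Your count only uses the fact that non-live prefixes share the sink, so it applies to that procedure unchanged.
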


The construction for a quantum state in the form of $\sum_{s \in \{0,1\}^n\ :\ a_s \ne 0_\mathbb K} a_s\ket{s}$ proceeds level by level, from the leaves up to the root:
\begin{enumerate}[(1)]
    \item \textbf{Leaf Level:}
    We create a state $q_s$ for each basis state $\ket s$ following a nonzero coefficient $a_s \in \mathbb{K}$, assigning the leaf transition $q_s \xrightarrow{\{1\}} a_s$. Additionally, we construct a default sink state $q_{\bot}^{n}$ with $q_{\bot}^{n} \xrightarrow{\{1\}} 0_\mathbb{K}$ to handle missing terms. Let $Q_n$ denote the set of these explicit states.

    \item \textbf{Internal Levels (Iterate $\ell$ from $n-1$ down to $0$):}
    At each level, we first construct a sink state $q_{\bot}^{\ell}$ with the internal transition $q_{\bot}^{\ell} \xrightarrow{\{1\}} (q_{\bot}^{{\ell+1}}, q_{\bot}^{{\ell+1}})$. Next, identifying the set of active prefixes $X_\ell = \{ x \in \{0,1\}^\ell \mid q_{x0} \in Q_{\ell+1} \lor q_{x1} \in Q_{\ell+1} \}$, we construct a state $q_x$ for each $x \in X_\ell$ with the transition $q_x \xrightarrow{\{1\}} (u_0, u_1)$. Here, the child state $u_b$ (for $b \in \{0,1\}$) is resolved to the explicit state $q_{xb}$ if $q_{xb} \in Q_{\ell+1}$, and defaults to $q_{\bot}^{{\ell+1}}$ otherwise. The set $Q_\ell$ is then updated to include these newly created states.

    \item \textbf{Root Assignment:} The state $q_\epsilon \in Q_0$ is designated as the root state.
\end{enumerate}
It is worth noting that if $a_s \ne 0_\mathbb K$ for all $s\in\set{0,1}^n$ (i.e., full support), the construction of sink states and their associated transitions becomes superfluous and can be skipped.

\begin{example}[Concrete Expansion and LSTA Construction for $S_{B,2}^{(j)}$ in \Cref{ex:Qubit-Level Construction}]\label{ex:Concrete Expansion and LSTA Construction}
Recall
\[
S_{B,2}^{(j)} = \left\{
    \sum_{q_j,\ m_j} \big\{f_1^{V_1}\big\}\ket{p_j q_j m_j} 
    + 
    \sum_{s_j,\ v_j} \big\{f_2^{V_2}\big\}\ket{s_j z_j v_j}
    : p_j, z_j
\right\}\text{ can be expanded into:}
\]
{\fontsize{8.0pt}{9.6pt}\selectfont
\[
\underbrace{\left\{
    \begin{aligned}
        & \Big\{f_1^{\{p\ne z \mapsto \mathsf{F},\ p\ne q \mapsto \mathsf{F}\}}\Big\} (\ket{000} + \ket{001}) \\
        +\ & \Big\{f_1^{\{p\ne z \mapsto \mathsf{F},\ p\ne q \mapsto \mathsf{T}\}}\Big\} (\ket{010} + \ket{011}) \\
        +\ & \Big\{f_2^{\{p\ne z \mapsto \mathsf{F}\}}\Big\} (\ket{000} + \ket{001} + \ket{100} + \ket{101})
    \end{aligned} 
\right\}}_{p_j=0,\ z_j=0}
\cup
\underbrace{\left\{
    \begin{aligned}
        & \Big\{f_1^{\{p\ne z \mapsto \mathsf{T},\ p\ne q \mapsto \mathsf{F}\}}\Big\} (\ket{000} + \ket{001}) \\
        +\ & \Big\{f_1^{\{p\ne z \mapsto \mathsf{T},\ p\ne q \mapsto \mathsf{T}\}}\Big\} (\ket{010} + \ket{011}) \\
        +\ & \Big\{f_2^{\{p\ne z \mapsto \mathsf{T}\}}\Big\} (\ket{010} + \ket{011} + \ket{110} + \ket{111})
    \end{aligned} 
\right\}}_{p_j=0,\ z_j=1}
\]
\[
\cup
\underbrace{\left\{
    \begin{aligned}
        & \Big\{f_1^{\{p\ne z \mapsto \mathsf{T},\ p\ne q \mapsto \mathsf{T}\}}\Big\} (\ket{100} + \ket{101}) \\
        +\ & \Big\{f_1^{\{p\ne z \mapsto \mathsf{T},\ p\ne q \mapsto \mathsf{F}\}}\Big\} (\ket{110} + \ket{111}) \\
        +\ & \Big\{f_2^{\{p\ne z \mapsto \mathsf{T}\}}\Big\} (\ket{000} + \ket{001} + \ket{100} + \ket{101})
    \end{aligned} 
\right\}}_{p_j=1,\ z_j=0}
\cup
\underbrace{\left\{
    \begin{aligned}
        & \Big\{f_1^{\{p\ne z \mapsto \mathsf{F},\ p\ne q \mapsto \mathsf{T}\}}\Big\} (\ket{100} + \ket{101}) \\
        +\ & \Big\{f_1^{\{p\ne z \mapsto \mathsf{F},\ p\ne q \mapsto \mathsf{F}\}}\Big\} (\ket{110} + \ket{111}) \\
        +\ & \Big\{f_2^{\{p\ne z \mapsto \mathsf{F}\}}\Big\} (\ket{010} + \ket{011} + \ket{110} + \ket{111})
    \end{aligned} 
\right\}}_{p_j=1,\ z_j=1}.
\]
} % \fontsize{8.9pt}{10pt}\selectfont

In the above expansion, the superscripts of functions $f_m$ are changed from $V_m$ to the definition body of $f_m$, implying that the explicit valuations of single-qubit variables are no longer required for the subsequent steps.

Take the third case $(p_j=1,\ z_j=0)$ as an example demonstrating the LSTA construction.
\[ 
    \bigg\{f_2^{\{p\ne z \,\mapsto\, \mathsf{T}\}}\bigg\}
\left(\begin{matrix} \ket{000} \\ + \ket{001} \end{matrix} \right) 
+ \bigg\{f_1^{\left\{\substack{p\ne z \,\mapsto\, \mathsf{T} \\ p\ne q \,\mapsto\, \mathsf{T}} \right\}},\ f_2^{\{p\ne z \,\mapsto\, \mathsf{T}\}}\bigg\}
\left(\begin{matrix} \ket{100} \\ + \ket{101} \end{matrix} \right) 
+ \bigg\{f_1^{\left\{\substack{p\ne z \,\mapsto\, \mathsf{T} \\ p\ne q \,\mapsto\, \mathsf{F}} \right\}}\bigg\}
\left(\begin{matrix} \ket{110} \\ + \ket{111} \end{matrix} \right).
\]

According to the construction procedure, the resulting transitions are shown below.
\begin{align*}
\Delta = \bigl\{ 
    & q_{000} \xrightarrow{\{1\}} \bigg\{f_2^{\{p\ne z \,\mapsto\, \mathsf{T}\}}\bigg\}, \ 
    q_{001} \xrightarrow{\{1\}} \bigg\{f_2^{\{p\ne z \,\mapsto\, \mathsf{T}\}}\bigg\}, \\
    & q_{100} \xrightarrow{\{1\}} \bigg\{f_1^{\left\{ \substack{p\ne z \,\mapsto\, \mathsf{T} \\ p\ne q \,\mapsto\, \mathsf{T}} \right\}},\ f_2^{\{p\ne z \,\mapsto\, \mathsf{T}\}}\bigg\}, \ 
    q_{101} \xrightarrow{\{1\}} \bigg\{f_1^{\left\{ \substack{p\ne z \,\mapsto\, \mathsf{T} \\ p\ne q \,\mapsto\, \mathsf{T}} \right\}},\ f_2^{\{p\ne z \,\mapsto\, \mathsf{T}\}}\bigg\}, \\ 
    & q_{110} \xrightarrow{\{1\}} \bigg\{f_1^{\left\{ \substack{p\ne z \,\mapsto\, \mathsf{T} \\ p\ne q \,\mapsto\, \mathsf{F}} \right\}}\bigg\}, \ 
    q_{111} \xrightarrow{\{1\}} \bigg\{f_1^{\left\{ \substack{p\ne z \,\mapsto\, \mathsf{T} \\ p\ne q \,\mapsto\, \mathsf{F}} \right\}}\bigg\}, \ 
    q_{\bot}^3 \xrightarrow{\{1\}} \emptyset, \
    q_{\bot}^2 \xrightarrow{\{1\}} (q_{\bot}^3, q_{\bot}^3), \\ 
    & q_{00} \xrightarrow{\{1\}} (q_{000}, q_{001}), \ 
    q_{10} \xrightarrow{\{1\}} (q_{100}, q_{101}), \ 
    q_{11} \xrightarrow{\{1\}} (q_{110}, q_{111}), \\ 
    & q_{\bot}^1 \xrightarrow{\{1\}} (q_{\bot}^2, q_{\bot}^2), \
    q_{0} \xrightarrow{\{1\}} (q_{00}, q_{\bot}^2), \
    q_{1} \xrightarrow{\{1\}} (q_{10}, q_{11}), \
    q_\epsilon \xrightarrow{\{1\}} (q_0, q_1)
\bigr\}.
\end{align*}

Infer $Q$ from $\Delta$. Then $\langle Q, \emptyset, \Delta, q_\epsilon\rangle$ is the resulting LSTA for the third state. There are four states in $S_{B,2}^{(j)}$, so we repeat the construction procedure for the remaining three states and take the LSTA set union of all four LSTAs to obtain the final LSTA for $S_{B,2}^{(j)}$.
\end{example}

\subsection{Final Assembly of LSTA}
Up to this point, all $\setq$ constructs within the assertions have been broken down into $\setQ$ constructs with their corresponding qubit-level LSTAs $\{\mathbf{M}_Q\}$. First, these LSTAs are combined via the tensor product operation to form variable-level LSTAs $\{\mathbf{M}'_V\}$, following the specified decomposition structure. To filter out invalid assignments, we apply a filter mapping $\mathsf{filter}_f$ to the valuation-dependent amplitudes in each $\mathbf{M}'_V$. We denote $f_m\equiv\mathbf{1}$ if $f_m(\phi) = \mathsf{T}$ for all $\phi \in \Phi_m$. Accordingly, the mapping is defined as $\mathsf{filter}_f(\set{f_m}_{m\in T}) = \sum_{m\in T, f_m\equiv\mathbf{1}}\tau_m$ (retaining terms where all applicable constraints are satisfied), or $\tau_0$ if the set $\{m\in T : f_m\equiv\mathbf{1}\}$ is empty. This effectively replaces leaf transitions $q \xrightarrow{C} e$ with $q \xrightarrow{C} \mathsf{filter}_f(e)$, resulting in LSTAs $\{\mathbf{M}_V\}$ that represent $\setV$ constructs. This mapping serves as a \textit{validation filter}, ensuring that each $\mathbf{M}_V$ encapsulates only valid $\termV$ constructs.
% \footnotetext{We say $f_m\equiv\mathbf{1}$ if $f_m(\phi) = \mathsf{T}, \quad \forall \phi \in \Phi_m$.}

\begin{example}[Application of $\mathsf{filter}_f$ to Amplitudes in \Cref{ex:Concrete Expansion and LSTA Construction}]
The mapping operates on $\Delta$ as follows:
$\mathsf{filter}_f\Big(\Big\{f_2^{\{p\ne z \,\mapsto\, \mathsf{T}\}}\Big\}\Big) = \tau_2$,\ 
$\mathsf{filter}_f\Big(\Big\{f_1^{\left\{ \substack{p\ne z \,\mapsto\, \mathsf{T} \\ p\ne q \,\mapsto\, \mathsf{T}} \right\}},\ f_2^{\{p\ne z \,\mapsto\, \mathsf{T}\}}\Big\}\Big) = \tau_1 + \tau_2$, 
$\mathsf{filter}_f\Big(\Big\{f_1^{\left\{ \substack{p\ne z \,\mapsto\, \mathsf{T} \\ p\ne q \,\mapsto\, \mathsf{F}} \right\}}\Big\}\Big) = \tau_0$, and $\mathsf{filter}_f(\emptyset) = \tau_0$.%\vspace{-2pt}
\paragraph{Remark.} This is only a demonstrating example. The real application will occur right after $\bigotimes_{j} \text{LSTA}(S_{B,2}^{(j)}) \in \set{\mathbf{M}'_V}$ has been constructed to obtain $\text{LSTA}(S_{B,2}) \in \set{\mathbf{M}_V}$.
\end{example}

Subsequently, the LSTAs in $\{\mathbf{M}_V\}$ are again tensored according to the higher-level decomposition structure to produce LSTAs $\{\mathbf{M}'_P\}$. We apply another filter mapping $\mathsf{filter}_\tau$, defined by $\mathsf{filter}_\tau(\sum_{m\in T}\tau_m) = \sum_{m\in T\setminus\set{0}}\alpha_m$ and $0$ if $T = \set{0}$, to the amplitudes in each $\mathbf{M}'_P$. This transforms leaf transitions $q \xrightarrow{C} e$ into $q \xrightarrow{C} \mathsf{filter}_\tau(e)$, yielding the final LSTAs $\{\mathbf{M}_P\}$ for $\setP$ constructs. This second stage ensures only projected terms originating from the same source term are recomposed during the tensor product.

Finally, all LSTAs in $\{\mathbf{M}_P\}$ now possess standard amplitudes in $\mathbb{C}[V_c]$. They are further fused by successively applying LSTA set union and tensor product operations as indicated by the remaining operators in the assertion to obtain the final LSTA $\mathcal{M}$.

As such, each assertion reduces to the form $\bigcup_{\theta\models\cpxcons} \mathcal{L}(\mathcal{M}(\theta))$, where $\theta$ denotes a valuation of complex variables. This formulation signifies that the assertion represents a set of concrete quantum states, each derived from a symbolic tree recognized by $\mathcal{M}$ under a concrete instantiation that satisfies the constraint $\cpxcons$.

\subsection{Space Complexity of LSTA}
In this section, we establish the space complexity of the LSTA constructed from an assertion, detailed in the following theorem\footnote{See the \hyperlink{pfFinalSpaceComplexity}{proof} in \Cref{app:space}.}. We characterize the complexity in terms of the total number of qubits ($L$) and four auxiliary structural parameters: the number of $\term$ constructs ($N_{\text{term}}$) and $\varcon$ constructs ($N_{\text{vc}}$) within each $\setq$ construct, the number of $\dirac$ constructs ($N_{\text{union}}$) within each $\uset$ construct, and the total number of distinct symbolic amplitudes ($N_{\text{amp}}$) in the assertion.
% \vspace{-5pt}
\begin{restatable}{theorem}{finalSpaceComplexity}\label{thm:finalSpaceComplexity}
The size of the final LSTA $\mathcal{M}$ constructed from an assertion is bounded by $
    |\Delta| = O\left(2^{N_{\text{term}} \cdot 2^{N_{\text{vc}}}} \cdot 2^{N_{\text{vc}}} \cdot N_{\text{term}} \cdot N_{\text{union}} \cdot L \cdot N_{\text{amp}}\right).
$
\end{restatable}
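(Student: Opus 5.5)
We bound the LSTA bottom-up, following the order in which the construction assembles it: first the qubit-level LSTAs $\{\mathbf{M}_Q\}$, then the variable-level $\{\mathbf{M}_V\}$, then the $\setP$-level $\{\mathbf{M}_P\}$, and finally $\mathcal{M}$. At every stage we track two quantities: the transition count, and the number $N_{\text{leaves}}$ of distinct leaf amplitudes. The latter is what \Cref{theorem:Binary Operations} charges for each tensor product.

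\textbf{Step 1 (a single qubit slice).} Fix a $\setQ$ construct $S_k$. Its qubit-level variables come from the slots of one connected component. Each $\varcon$ contributes at most two single-qubit variables, and each $\termV$ contributes its own local summation variables. Hence the number of relevant single-qubit assignments, that is, the number of unioned quantum states in the concrete expansion, is bounded by $2^{O(N_{\text{vc}})}$. We absorb this into the factor $2^{N_{\text{vc}}}$ by counting inequality valuations rather than raw assignments: two assignments producing the same amplitude vector give the same state. Each expanded state has only $O(1)$ qubits per variable position, so its width is bounded by the slot count. By \Cref{thm:atomicSpaceComplexity}, each such state yields an LSTA of size $O(N\cdot n)$ per slice. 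Taking the union over states, \Cref{theorem:Binary Operations} gives $|\mathbf{M}_Q| = O(2^{N_{\text{vc}}}\cdot N_{\text{term}}\cdot w)$, where $w$ is the slice width.

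The leaf amplitudes are valuation-dependent sets $\{f_m\}_{m\in T}$ with $T\subseteq[N_{\text{term}}]$ and $f_m\colon\Phi_m\to\bool$, where $|\Phi_m|\le N_{\text{vc}}$. The number of such objects is at most $(1+2^{N_{\text{vc}}})^{N_{\text{term}}}$. We bound this crudely by $2^{N_{\text{term}}\cdot 2^{N_{\text{vc}}}}$, which is exactly the leading factor in the statement.

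\textbf{Step 2 (the tensor chain).} Tensoring the slices $S_1\otimes\cdots\otimes S_\ell$ left to right, \Cref{theorem:Binary Operations} adds at each step $N_{\text{leaves}}(\text{prefix})\cdot|\mathbf{M}_Q|$. The key observation is that $N_{\text{leaves}}$ does not grow along the chain. After a tensor step, leaves are products $d_1\cdot d_2$, and the product (intersection of $T$, pointwise disjunction) stays inside the same domain of size at most $2^{N_{\text{term}}2^{N_{\text{vc}}}}$. Therefore each step costs at most $2^{N_{\text{term}}2^{N_{\text{vc}}}}\cdot O(2^{N_{\text{vc}}}N_{\text{term}}w)$, and summing over slices is linear in the qubit count of the component.

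Applying $\mathsf{filter}_f$ only relabels leaves, so it does not increase the transition count. It maps into sums of tags, of which there are at most $2^{N_{\text{term}}}$, again within the bound. The variable-level chain $S_1\otimes\cdots\otimes S_k$ is handled identically: tag sums are closed under product, so $N_{\text{leaves}}\le 2^{N_{\text{term}}}$. Summing over components gives a total linear in the number of qubits of the $\setP$.

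\textbf{Step 3 (symbolic amplitudes and final assembly).} After $\mathsf{filter}_\tau$, leaves are sums of the original $\alpha_m$, and these are bounded by $N_{\text{amp}}$ distinct values (interpreting $N_{\text{amp}}$ as counting the distinct resulting symbolic amplitudes). Unions within a $\uset$ multiply the size by at most $N_{\text{union}}$. The remaining tensor products across segments multiply each added component by $N_{\text{leaves}}\le N_{\text{amp}}$. Since the segment lengths sum to $L$, we obtain a bound of the form $2^{N_{\text{term}}2^{N_{\text{vc}}}}\cdot 2^{N_{\text{vc}}}\cdot N_{\text{term}}\cdot N_{\text{union}}\cdot L\cdot N_{\text{amp}}$.

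\textbf{Main obstacle.} The delicate point is Step 2: proving that the number of distinct leaf amplitudes stays uniformly bounded through the whole tensor chain. Without this, the factor $N_{\text{leaves}}$ would compound multiplicatively, giving a bound exponential in $L$. My first attempt is an invariant: every leaf of an intermediate product lies in the finite semiring domain generated by the singleton amplitudes, and that domain's size is independent of $L$. A second delicate point is justifying that unions of per-state LSTAs at a fixed slice do not reintroduce dependence on the total width. Here I would use that each state in a slice has width $O(\text{slots})$, not $O(L)$.
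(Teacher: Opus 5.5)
Your skeleton matches the paper's proof: a per-slice bound from \Cref{thm:atomicSpaceComplexity} plus unions, then a leaf domain closed under product, which gives the multiplier $W$ along the qubit-level chain, then factors $N_{\text{union}}$ and $N_{\text{amp}}$. Your count $(1+2^{N_{\text{vc}}})^{N_{\text{term}}}\le 2^{N_{\text{term}}2^{N_{\text{vc}}}}$ is fine.

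The genuine problem is the absorption step in Step 1. You claim the number of states is governed by inequality valuations rather than raw assignments. That is false. The bits of set-predicate variables occur in the kets, so different assignments give states with different supports even when every inequality gets the same truth value. In the paper's expansion of $S_{B,2}^{(j)}$, the cases $p_j=z_j=0$ and $p_j=z_j=1$ both have $p\ne z\mapsto\mathsf{F}$, yet they are different states. The bits of local summation variables are not separate states at all. They are distinct nonzero basis terms of one state, distinct because every iterating variable occurs in the $\vstr$. So nothing merges, and your ``two variables per $\varcon$'' count only reaches $4^{N_{\text{vc}}}$, not $2^{N_{\text{vc}}}$.

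The paper counts variables instead. Each slice has $2^{N_v^S}$ states, each with at most $2^{N_v^T}N_{\text{term}}$ nonzero terms on $N_{\text{slot}}$ qubits. This gives $O(2^{N_v}N_{\text{term}}N_{\text{slot}})$ with $N_v=N_v^S+N_v^T\le N_{\text{vc}}$. That last inequality is what you need. It follows from well-formedness condition (1): a $\varcon$ can newly fix the length of at most one variable, since in $V\ne V'$ the other side must already have known length.

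There are two further points.

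First, in Step 2 the variable-level chain charges each added $\mathbf{M}_{V_i}$ a factor up to $2^{N_{\text{term}}}$, on top of the $W$ already inside $|\mathbf{M}_{V_i}|$. This factor silently vanishes from your final bound. The paper avoids the chain by bounding the case without variable-level reordering. If you keep the chain, you must absorb the factor. For example, the $\varcon$s split among $k\ge 2$ components with at least one each. So every component's own multiplier is at most $2^{N_{\text{term}}2^{N_{\text{vc}}-1}}$, and $2^{N_{\text{term}}}\cdot 2^{N_{\text{term}}2^{N_{\text{vc}}-1}}\le 2^{N_{\text{term}}2^{N_{\text{vc}}}}$.

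Second, in Step 3, charging $N_{\text{amp}}$ per added segment requires $N_{\text{amp}}$ to bound the distinct leaves of every \emph{prefix} of the segment chain, not just of each $\mathbf{M}_P$. Your Step 2 closure argument does not carry over to $\mathbb{C}[V_c]$, because products of amplitudes from different segments are new values. For instance, $\{a\ket{0}+b\ket{1}\}\otimes\{c\ket{0}+d\ket{1}\}$ already has four distinct leaves. The paper's final step makes the same move, so this is not a departure from it. Still, your reinterpretation of $N_{\text{amp}}$ is an assumption carrying that step and should be stated as one.
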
%\vspace{-5pt}

Our complexity is \emph{linear} in the key parameter $L$. This efficiency stems from transforming assertions into tensor product structures, leveraging the additive complexity of LSTA tensor product operations with a scaling multiplier. The remaining structural parameters are independent of $L$ and negligible in practice; in our experiments, all such parameters are bounded by~$2$ (see \Cref{table:prepost} in \Cref{app:exp}).

% benchmarks
\newcommand{\bvsingbench}[0]{\textsc{BV-Sing}\normalsize\xspace}
\newcommand{\bvmultbench}[0]{\textsc{BV}\normalsize\xspace}
\newcommand{\ghzsingbench}[0]{\textsc{GHZ-Sing}\normalsize\xspace}
\newcommand{\ghzmultbench}[0]{\textsc{GHZ}\normalsize\xspace}
\newcommand{\groversingbench}[0]{\textsc{Grover-Sing}\normalsize\xspace}
\newcommand{\grovermultbench}[0]{\textsc{Grover}\normalsize\xspace}
\newcommand{\oegroverbench}[0]{\textsc{GroverIter}\normalsize\xspace}
\newcommand{\mctoffolibench}[0]{\textsc{MCToffoli}\normalsize\xspace}

\section{Experimental Results}\label{sec:Experimental Results}
We implemented the specification-to-automata translation framework proposed in this work and compared it with the translation algorithm in the latest version of \autoqqq~\cite{DBLP:conf/tacas/ChenCHHLLT25}\footnote{https://github.com/fmlab-iis/AutoQ}. The latter uses the same specification language as described in~\cite{DBLP:conf/cav/ChenCLLT23}, but translates specifications into LSTAs.

We evaluated performance on a suite of representative circuits: Bernstein-Vazirani (\bvmultbench), Greenberger-Horne-Zeilinger state preparation~\cite{Greenberger1989} (\ghzmultbench), Grover's search (\grovermultbench), one iteration of Grover's search (\oegroverbench), and the multi-controlled Toffoli gate (\mctoffolibench), as detailed in \Cref{sec:usecases}. These benchmarks encompass canonical quantum algorithms and standard composite gates. The preconditions and postconditions for these benchmarks are listed in \Cref{table:prepost} in \Cref{app:exp}.

For the oracle-based algorithms (\bvmultbench, \grovermultbench, and \oegroverbench), we verified their parameterized-oracle versions to guarantee correctness for all possible oracles. This was achieved by using the first $n$ qubits to control the X gates in the oracle via CX gates. For \ghzmultbench, we input all basis states to demonstrate the bitwise complement feature of our language. For \mctoffolibench, we verified the four specific cases indicated in \Cref{sec:multi-control} individually to ensure the functional correctness of the implementation.

We conducted all experiments on a server running Ubuntu 24.04.3 LTS, equipped with an AMD EPYC 7742 64-core processor (1.5 GHz), 2 TiB of RAM, and a 4 TB SSD. A timeout of 5 minutes was enforced for each circuit verification. The results are presented in \Cref{table:exp}. The proposed translation algorithm is consistently faster than the original \autoqq algorithm. Moreover, it produces smaller LSTAs, leading to shorter verification times. These results highlight the scalability achieved by our approach.\vspace{-10pt}

\begin{table}[htbp]
% tool names
\newcommand{\tool}[0]{\textsc{LSTAQ}\xspace}
\newcommand{\correct}[0]{T\xspace}
\newcommand{\wrong}[0]{F\xspace}
\newcommand{\unknown}[0]{---\xspace}
\newcommand{\nacell}[0]{\cellcolor{black!20}}
\newcommand{\TO}[0]{\nacell TIMEOUT}
\newcommand{\error}[0]{\nacell OOM}
    % 1. Caption 維持原樣，包含完整的欄位定義
    \caption{Results of verifying our use cases with this work and \cite{DBLP:conf/tacas/ChenCHHLLT25}. Columns \textbf{\#q} and \textbf{\#G} denote the number of qubits and gates of the circuit, respectively. For each case, we report the time required to: translate the specification into precondition and postcondition LSTAs (trans), perform the verification process (ver), and the total running time (total). The timeout is 5 minutes.}\vspace{-5pt}
    \label{table:exp}
    \centering
    \scriptsize
    \setlength{\tabcolsep}{2.8pt}
    
    % =============================================
    % [左半部] 前三組 Benchmark (共約 15 行數據)
    % =============================================
    \begin{minipage}[t]{0.49\linewidth}
        \centering
        \begin{tabular}{c c c ccc ccc}
        \toprule
        \multirow{2}{*}{} & \multirow{2}{*}{\textbf{\#q}} & \multirow{2}{*}{\textbf{\#G}} & \multicolumn{3}{c}{This Work} & \multicolumn{3}{c}{\autoqqq~\cite{DBLP:conf/tacas/ChenCHHLLT25}} \\
        \cmidrule(lr){4-6} \cmidrule(lr){7-9}
          & & & trans & ver & total & trans & ver & total \\
        \midrule
        % Group 1: BV
          \multirow{5}{*}{\rotatebox[origin=c]{90}{\bvmultbench}}
         & 17 & 27 & 0.0s & 0.0s & \textbf{0.0s} & 1.5s & 0.9s & \textbf{2.4s}\\
         & 19 & 30 & 0.0s & 0.0s & \textbf{0.0s} & 6.1s & 4.3s & \textbf{10.4s}\\
         & 21 & 33 & 0.0s & 0.0s & \textbf{0.0s} & 25s & 19.2s & \textbf{44.2s}\\
         & 23 & 36 & 0.0s & 0.0s & \textbf{0.0s} & 2m19s & 1m35s & \textbf{3m54s}\\
         & 25 & 39 & 0.0s & 0.0s & \textbf{0.0s} & \multicolumn{3}{c}{\TO}\\
        \midrule
        % Group 2: GHZ
          \multirow{5}{*}{\rotatebox[origin=c]{90}{\ghzmultbench}}
         & 8 & 8 & 0.0s & 0.0s & \textbf{0.0s} & 0.6s & 0.2s & \textbf{0.8s}\\
         & 9 & 9 & 0.0s & 0.0s & \textbf{0.0s} & 2.7s & 1.1s & \textbf{3.8s}\\
         & 10 & 10 & 0.0s & 0.0s & \textbf{0.0s} & 11s & 4.8s & \textbf{15.8s}\\
         & 11 & 11 & 0.0s & 0.0s & \textbf{0.0s} & 44s & 19.1s & \textbf{1m3s}\\
         & 12 & 12 & 0.0s & 0.0s & \textbf{0.0s} & \multicolumn{3}{c}{\TO}\\
        \midrule
        % Group 3: Grover
          \multirow{5}{*}{\rotatebox[origin=c]{90}{\grovermultbench}}
         & 20 & 544 & 0.0s & 0.2s & \textbf{0.2s} & 0.5s & 2.9s & \textbf{3.4s}\\
         & 23 & 927 & 0.0s & 0.5s & \textbf{0.5s} & 1.9s & 10.1s & \textbf{12s}\\
         & 26 & 1475 & 0.0s & 1s & \textbf{1s} & 7.3s & 42s & \textbf{49.3s}\\
         & 29 & 2408 & 0.0s & 1.9s & \textbf{1.9s} & 30s & 3m33s & \textbf{4m3s}\\
         & 32 & 3711 & 0.0s & 3.5s & \textbf{3.5s} & \multicolumn{3}{c}{\TO}\\
        \bottomrule
        \end{tabular}
    \end{minipage}
    % \hfill
    % =============================================
    % [右半部] 後兩組 Benchmark + 空白填充 + Footnote
    % =============================================
    \begin{minipage}[t]{0.49\linewidth}
        \centering
        \begin{tabular}{c c c ccc ccc}
        \toprule
        \multirow{2}{*}{} & \multirow{2}{*}{\textbf{\#q}} & \multirow{2}{*}{\textbf{\#G}} & \multicolumn{3}{c}{This Work} & \multicolumn{3}{c}{\autoqqq~\cite{DBLP:conf/tacas/ChenCHHLLT25}} \\
        \cmidrule(lr){4-6} \cmidrule(lr){7-9}
          & & & trans & ver & total & trans & ver & total \\
        \midrule
        % Group 4: OE Grover
          \multirow{5}{*}{\rotatebox[origin=c]{90}{\oegroverbench}}
         & 20 & 79 & 0.0s & 0.1s & \textbf{0.1s} & 0.8s & 2.7s & \textbf{3.5s}\\
         & 23 & 91 & 0.0s & 0.1s & \textbf{0.1s} & 2.9s & 8.3s & \textbf{11.2s}\\
         & 26 & 103 & 0.0s & 0.1s & \textbf{0.1s} & 11s & 30.7s & \textbf{41.7s}\\
         & 29 & 115 & 0.0s & 0.1s & \textbf{0.1s} & 50s & 2m19s & \textbf{3m9s}\\
         & 32 & 127 & 0.0s & 0.1s & \textbf{0.1s} & \multicolumn{3}{c}{\TO}\\
        \midrule
        % Group 5: MC Toffoli
          \multirow{5}{*}{\rotatebox[origin=c]{90}{\mctoffolibench\scriptsize\hspace{-2pt}\textsuperscript{*}}} % 手動標記 a
         & 16 & 15 & 0.0s & 0.0s & \textbf{0.0s} & 1.2s & 0.4s & \textbf{1.6s}\\
         & 18 & 17 & 0.0s & 0.0s & \textbf{0.0s} & 4.4s & 1.6s & \textbf{6s}\\
         & 20 & 19 & 0.0s & 0.0s & \textbf{0.0s} & 18.1s & 7.3s & \textbf{25.4s}\\
         & 22 & 21 & 0.0s & 0.0s & \textbf{0.0s} & 1m13s & 30.4s & \textbf{1m43s}\\
         & 24 & 23 & 0.0s & 0.0s & \textbf{0.0s} & \multicolumn{3}{c}{\TO}\\
        \midrule
         \multicolumn{9}{l}{*) Running time in this benchmark is the sum of four cases}\\
         \multicolumn{9}{l}{indicated in \Cref{sec:multi-control}.}\\\\\\\\\arrayrulecolor{white}\bottomrule\arrayrulecolor{black}
        \end{tabular}
    \end{minipage}
\end{table}

\paragraph{Cross-Paradigm Comparison.} 
Beyond the performance improvements in translation complexity, the tree-automata-based paradigm utilized in this work has been extensively benchmarked against various automatic verification tools in prior studies~\cite{DBLP:journals/pacmpl/AbdullaCCHLLLT25,DBLP:conf/tacas/ChenCHHLLT25,DBLP:conf/cav/ChenCLLT23,pldi23,cacm25}. These include symbolic verifiers such as \textsc{symQV}~\cite{DBLP:conf/fm/BauerMarquartLS23} (based on the SMT theory of reals) and \textsc{CaAL}~\cite{DBLP:conf/cade/ChenRT23} (based on an extended SMT theory of arrays), as well as simulators like the state-vector-based \textsc{SV-Sim}~\cite{DBLP:conf/sc/LiFGPHRK21} and the decision-diagram-based \textsc{SliQSim}~\cite{DBLP:conf/dac/TsaiJJ21}. Furthermore, the paradigm has been compared with the \textsc{Feynman} tool suite~\cite{DBLP:journals/corr/abs-1805-06908} (based on the path-sum) and \textsc{Qcec}~\cite{DBLP:journals/tcad/BurgholzerW21} (which integrates decision diagrams, the ZX-calculus~\cite{Coecke2011}, and random stimuli generation~\cite{DBLP:conf/aspdac/BurgholzerKW21}). These evaluations demonstrate that while primarily focusing on pure-state families, the set-based, automata-driven approach outperforms these counterparts in terms of running time and scalability, particularly for large-scale instances. We omit direct comparisons with Hermitian-based or projection-based tools as they typically prioritize the breadth of mixed-state logic and often require non-trivial manual proof developments in interactive theorem provers, whereas our approach provides a fully automated, ``push-button'' verification path for large-scale circuits.

%%%%%%%%%%%%%%%%%%%%%%%%%%%%%%%%%%%%%%%%%%%%%%%%%%%%%%%%%%%
\section{Concluding Remarks}\label{sec:Conclusion}

We have bridged the gap between expressive high-level specifications and fully automated quantum program verification. By identifying the exponential blow-up in prior automata-based approaches as a primary bottleneck, we introduced an extended specification language capable of describing complex families of quantum states and proposed a novel translation algorithm. By leveraging variable-level and qubit-level reordering strategies to facilitate tensor product decomposition, our algorithm reduces the construction complexity of LSTAs from exponential to linear in the number of qubits. Our experimental results demonstrate that this approach dramatically improves scalability, enabling the verification of large-scale circuits that were previously beyond the reach of existing automated tools.

To better situate our framework within the landscape of quantum verification, it is important to distinguish its underlying philosophy from traditional methods. Our approach adopts an extensional view of specifications, where predicates are identified with sets of quantum states and represented using a compact automata-based structure. Although this differs from the intensional, formula-based approach in Hoare-style logics, it still supports principled reasoning through set-theoretic operations such as inclusion, union, and intersection, reflecting a different balance between expressiveness and tractability. An important property of LSTAs is that they are not closed under complement. Supporting arbitrary negation would require leaving the automata domain, thereby sacrificing compactness and automation. We therefore treat this as a deliberate restriction to preserve tractability. Such restrictions are not entirely foreign to logical systems. For instance, in intuitionistic logic, negation is not treated as a primitive operator in the same way as in classical logic, and reasoning often proceeds via implication. Similarly, while complement is not supported in our setting, we retain a meaningful notion of implication through set inclusion, which can be efficiently checked within our automata framework. Overall, we view our approach as complementary to QHL-style frameworks: logical specifications provide generality and strong reasoning principles, while our representation-aligned design enables scalable and fully automatic verification in a practically important regime.

Looking ahead, there are several promising avenues for extending this framework. One potential direction involves the automatic splitting of atomic variables to facilitate variable alignment, which is particularly beneficial when manual decomposition is complex. Such an extension could be realized by enhancing $\varcon$ constructs to support logical disjunction. Additionally, the framework could be further extended to support classical-quantum states, enabling the verification of programs with mid-circuit measurements and sophisticated classical control flow. This might be achieved by, for instance, employing equalities to bind classical and quantum states to variables, offering a more flexible representation. These developments would further enhance the practical utility and automated capabilities of our framework, ultimately paving the way for verifying realistic hybrid quantum-classical workflows.

% --- Data-Availability Statement ---
\subsubsection*{Data-Availability Statement.}
The code and benchmarks for reproducing the findings of this paper are available on Zenodo at \url{https://doi.org/10.5281/zenodo.19724316}~\cite{artifact}.

\subsubsection*{Acknowledgments.}
We thank the anonymous reviewers for their constructive feedback. This work was supported by National Science and Technology Council, R.O.C., project NSTC 114-2119-M-001-002; Air Force Office of Scientific Research project FA2386-23-1-4107; Academia Sinica Investigator Project Grant AS-IV-114-M07; the Czech Science Foundation project 25-18318S; and the FIT BUT internal project FIT-S-26-9011.

% --- Disclosure of Interests ---
\subsubsection*{Disclosure of Interests.}
The authors have no relevant financial or non-financial interests to disclose.

%%%%%%%%%%%%%%%%%%%%%%%%%%%%%%%%%%%%%%%%%%%%%%%%%%%%%%%%%%%%%
% \clearpage
\bibliographystyle{splncs04}
\bibliography{literature}
%%%%%%%%%%%%%%%%%%%%%%%%%%%%%%%%%%%%%%%%%%%%%%%%%%%%%%%%%%%%%

% \newpage
\appendix
\section{Appendix}

\subsection{Binary Operations of LSTA}\label{app:bin}

\begin{lemma}[Semantics of LSTA Set Union]\label{lem:unionSemantics}
Given two LSTAs $\mathcal{A}$ and $\mathcal{B}$ over $\mathbb{K}$ where $\mathcal L(\mathcal A)$ and $\mathcal L(\mathcal B)$ contain $n$-qubit and $m$-qubit states, respectively, there exists a set union operation, denoted by $\mathcal{A} \sqcup \mathcal{B}$, which yields a valid LSTA over $\mathbb{K}$. The recognized language of the resulting LSTA satisfies:
\[
    \mathcal{L}(\mathcal{A} \sqcup \mathcal{B}) = \mathcal{L}(\mathcal{A}) \cup \mathcal{L}(\mathcal{B}).
\]
\end{lemma}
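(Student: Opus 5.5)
We will give an explicit construction and then prove the two language inclusions. First we rename states so that $Q_\mathcal A\cap Q_\mathcal B=\emptyset$, and we set $V=V_\mathcal A\cup V_\mathcal B$. Let $k$ be the maximum choice appearing in any transition of $\mathcal A$. We shift every choice set of $\mathcal B$ by $k$, i.e.\ replace $C$ with $C+k=\{c+k\mid c\in C\}$; this preserves $\mathcal L(\mathcal B)$.

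We then introduce a fresh root $r$. For every root transition $r_\mathcal A\xrightarrow{C}(q_1,q_2)$ we add $r\xrightarrow{C}(q_1,q_2)$, and for every shifted root transition $r_\mathcal B\xrightarrow{C+k}(q_1,q_2)$ we add $r\xrightarrow{C+k}(q_1,q_2)$. All other transitions of $\mathcal A$ and shifted $\mathcal B$ are kept. The old roots may become unreachable and can be dropped, which also gives $|\mathcal A\sqcup\mathcal B|\le|\mathcal A|+|\mathcal B|$.

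Validity comes next. Transitions with top state $r$ coming from the two automata have choice sets in $[1,k]$ and $[k+1,\infty)$ respectively, so they are disjoint. Transitions from the same automaton were already disjoint. Every other top state belongs to exactly one automaton, so choice disjointness is inherited. The leaf amplitudes lie in $\mathbb K[V]$, so the result is a valid LSTA over $\mathbb K$.

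For the semantics we prove the two inclusions.

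For $\supseteq$, take $\ket\psi\in\mathcal L(\mathcal A)$, witnessed by choices $c_1,\dots,c_n,c_0$. The same sequence works in $\mathcal A\sqcup\mathcal B$. At level $1$ the unique transition enabled at $r$ by $c_1$ is the copy of the one enabled at $r_\mathcal A$, since $c_1\le k$ and so no $\mathcal B$-copy is enabled. Below level $1$ all reached states lie in $Q_\mathcal A$ and have exactly the same transitions as in $\mathcal A$. The induced tree is therefore identical. The case of $\mathcal B$ is symmetric, using the shifted choices $c_i+k$; this is legitimate because the shift was applied uniformly to all of $\mathcal B$.

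For $\subseteq$, take a tree induced by $c_1,\dots,c_0$. The transition at $r$ enabled by $c_1$ belongs to one side, say $\mathcal A$. From then on, every state on every path lies in $Q_\mathcal A$, because the state sets are disjoint and transitions never cross between them. Hence the same choices induce the same tree in $\mathcal A$, or $c_i-k$ does so in $\mathcal B$ for the other side.

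The main obstacle is the interplay between the uniform choice sequence and the requirement that \emph{every} branch uses the same $c_i$ at level $i$. The per-level uniqueness conditions must hold in the new automaton exactly when they hold in the original one. This is where the disjoint renaming of states and the global shift of choices, applied to all levels rather than just the root, are essential.

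A further subtlety arises when $n\ne m$. In that case the union language is still a set of trees of mixed heights, and the construction above handles it without change. Since the paper always applies $\sqcup$ to operands of equal qubit length, we will just remark on this.
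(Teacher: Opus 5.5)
Your proof is correct and takes essentially the paper's approach: a fresh root whose transitions copy the root transitions of the two state-disjoint automata, with pairwise disjoint choice sets at that root (the paper re-indexes these copies to singletons $\{1\},\dots,\{k\}$ instead of shifting $\mathcal{B}$), and your explicit two-inclusion argument makes precise the language equality that the paper only states intuitively. One minor remark: the global shift of all of $\mathcal{B}$'s choices is not essential, because choice disjointness is a per-top-state condition and $c_1$ is used only at the root, so shifting just the root copies suffices (and if $0$ is an admissible choice, shift by $k+1$ rather than $k$).
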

\begin{proof}
Given two LSTAs $\mathcal{A} = \langle Q_\mathcal{A}, V_\mathcal{A}, \Delta_\mathcal{A}, r_\mathcal{A}\rangle$ and $\mathcal{B} = \langle Q_\mathcal{B}, V_\mathcal{B}, \Delta_\mathcal{B}, r_\mathcal{B}\rangle$, we compute their set union $\mathcal{A} \sqcup \mathcal{B}$ as follows such that $\mathcal L(\mathcal{A} \sqcup \mathcal{B}) = \mathcal L(\mathcal{A}) \cup \mathcal L(\mathcal{B})$. Without loss of generality, we assume $Q_\mathcal{A} \cap Q_\mathcal{B} = \emptyset$. Intuitively, the set union automaton acts as a selector that chooses to enter either $\mathcal{A}$ or $\mathcal{B}$ based on the choice of the root transition. This can be constructed by merging the two automata and modifying the root transitions. We introduce a new root state $r_\cup$ and define the new transition set as follows. Let $\{\delta_1, \delta_2, \dots, \delta_k\} = \Delta^\mathcal{A}_r \cup \Delta^\mathcal{B}_r$ be an arbitrary enumeration of all original root transitions. We construct a set of transitions originating from $r_\cup$ with sequentially re-indexed choices $\Delta^\cup_r = \{ r_\cup \xrightarrow{\{i\}} (\leftof{\delta_i},\ \rightof{\delta_i}) \mid 1 \le i \le k \}$. The resulting $\mathcal{A} \sqcup \mathcal{B} \triangleq \langle Q_\mathcal{A} \cup Q_\mathcal{B} \cup \{r_\cup\}, V_\mathcal{A} \cup V_\mathcal{B}, (\Delta_\mathcal A \setminus \Delta^\mathcal{A}_r) \cup (\Delta_\mathcal B \setminus \Delta^\mathcal{B}_r) \cup \Delta^\cup_r, r_\cup\rangle$. The sequential choice re-indexing $\{1\}, \dots, \{k\}$ trivially ensures the choice disjointness condition.
\qed
\end{proof}

\begin{lemma}[Semantics of LSTA Tensor Product]\label{lem:tensorSemantics}
Given two LSTAs $\mathcal{A}$ and $\mathcal{B}$ over $\mathbb{K}$ where $\mathcal L(\mathcal A)$ and $\mathcal L(\mathcal B)$ contain $n$-qubit and $m$-qubit states, respectively, there exists a tensor product operation, denoted by $\mathcal{A} \otimes \mathcal{B}$, which yields a valid LSTA over $\mathbb{K}$. Furthermore, the recognized language of the resulting LSTA satisfies:
\[
    \mathcal{L}(\mathcal{A} \otimes \mathcal{B}) = \mathcal{L}(\mathcal{A}) \otimes \mathcal{L}(\mathcal{B}).
\]
\end{lemma}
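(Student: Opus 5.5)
We would prove \Cref{lem:tensorSemantics} by an explicit construction that implements, at the automaton level, the tree intuition from \Cref{sec:Quantum States in Trees}: every leaf of a tree of $\mathcal{A}$ carrying $a_s$ is replaced by a copy of a tree of $\mathcal{B}$ scaled by $a_s$. Let $\mathcal{A} = \langle Q_\mathcal{A}, V_\mathcal{A}, \Delta_\mathcal{A}, r_\mathcal{A}\rangle$ and $\mathcal{B} = \langle Q_\mathcal{B}, V_\mathcal{B}, \Delta_\mathcal{B}, r_\mathcal{B}\rangle$. As in \Cref{lem:unionSemantics}, we assume that the state sets are disjoint and that the variable sets are disjoint (after renaming).

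\textbf{Construction.} Let $E = \{\symof{\delta} \mid \delta \in \Delta^\mathcal{A}_\textit{ex}\}$ be the set of distinct leaf amplitudes of $\mathcal{A}$, so $|E| = N_{\text{leaves}}(\mathcal{A})$.
\begin{enumerate}[(1)]
\item For each $e \in E$ we create a fresh copy $\mathcal{B}_e$ of $\mathcal{B}$, with states $\langle q, e\rangle$. In $\mathcal{B}_e$ every leaf transition $q \xrightarrow{C} b$ is replaced by $\langle q,e\rangle \xrightarrow{C} e\cdot b$, where the product is taken in $\mathbb{K}[V_\mathcal{A}\cup V_\mathcal{B}]$.
\item We keep all internal transitions of $\mathcal{A}$ unchanged.
\item Level $n$ of $\mathcal{A}$ and level $1$ of $\mathcal{B}$ must be fused into a single level. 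The obstacle is level synchronisation. In a run of the product, the one choice $c_{n+1}$ at that level has to encode two things: the leaf choice $c_0^\mathcal{A}$ of $\mathcal{A}$ and the root choice $c_1^\mathcal{B}$ of $\mathcal{B}$.
\item We resolve this with an injective pairing $\pi:\mathbb{N}\times\mathbb{N}\to\mathbb{N}$, for instance Cantor's pairing.
\item Each leaf transition $q \xrightarrow{C} e$ of $\mathcal{A}$, together with each root transition $\delta \in \Delta^\mathcal{B}_r$, yields the internal transition
\[ q \xrightarrow{\pi(C\times \chof{\delta})} \bigl(\langle \leftof{\delta},e\rangle, \langle\rightof{\delta},e\rangle\bigr). \]
\end{enumerate}
The choice set $\pi(C\times \chof{\delta})$ is finite and nonempty. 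Choice disjointness at $q$ follows from the injectivity of $\pi$, combined with the choice disjointness of $\mathcal{A}$ at $q$ and of $\mathcal{B}$ at $r_\mathcal{B}$. All other states inherit choice disjointness directly. The root of the product is $r_\mathcal{A}$.

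\textbf{Semantics.} Next we show $\mathcal{L}(\mathcal{A}\otimes\mathcal{B}) = \mathcal{L}(\mathcal{A})\otimes\mathcal{L}(\mathcal{B})$ by a bijection between choice sequences. A product sequence $c_1,\dots,c_n,\pi(c_0^\mathcal{A},c_1^\mathcal{B}),c_2^\mathcal{B},\dots,c_m^\mathcal{B},c_0^\mathcal{B}$ corresponds to the pair of sequences $(c_1,\dots,c_n,c_0^\mathcal{A})$ for $\mathcal{A}$ and $(c_1^\mathcal{B},\dots,c_0^\mathcal{B})$ for $\mathcal{B}$.
\begin{itemize}
\item Take a basis string $st$ with $s\in\{0,1\}^n$ and $t\in\{0,1\}^m$. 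For the first $n$ bits, the run follows exactly the run of $\mathcal{A}$ on $s$ and reaches some state $q_n$.
\item At $q_n$, the unique enabled fused transition is the one built from the unique $\mathcal{A}$-leaf transition enabled by $c_0^\mathcal{A}$ (amplitude $a_s$) and the unique $\mathcal{B}$-root transition enabled by $c_1^\mathcal{B}$.
\item From there on, the run follows $\mathcal{B}$'s run on $t$ inside the copy $\mathcal{B}_{a_s}$ and ends with amplitude $a_s b_t$.
\end{itemize}
So the induced tree is exactly $\ket{\psi}\otimes\ket{\phi}$. Conversely, any product sequence whose paired choice lies outside the image of $\pi$ induces no tree in either direction. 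Hence the correspondence holds in both directions. Commutativity of $\mathbb{K}$ is used only to match the order of multiplication with $a_s b_t$.

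\textbf{Size bound.} The step I expect to be hardest is matching the bound $|\mathcal{A}| + N_{\text{leaves}}(\mathcal{A})\cdot|\mathcal{B}|$ of \Cref{theorem:Binary Operations}.
\begin{itemize}
\item The copies contribute at most $N_{\text{leaves}}(\mathcal{A})\cdot(|\mathcal{B}|-|\Delta^\mathcal{B}_r|)$ transitions, because the root transitions of each copy are unreachable and can be dropped.
\item The internal transitions of $\mathcal{A}$ contribute $|\Delta^\mathcal{A}_\textit{in}|$.
\item The fused transitions number $|\Delta^\mathcal{A}_\textit{ex}|\cdot|\Delta^\mathcal{B}_r|$. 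This product term is what could exceed the claimed bound.
\end{itemize}
I would first try to remove it in the common case where each leaf-level state of $\mathcal{A}$ has one leaf transition, which is what our construction in \Cref{sec:Compact-LSTA-Construction} produces. There, $q$ can simply be identified with the root of $\mathcal{B}_{\symof{\delta}}$ and the pairing dropped. This costs $|\Delta^\mathcal{A}_\textit{in}| + N_{\text{leaves}}(\mathcal{A})\cdot|\mathcal{B}|$ plus at most $|\Delta^\mathcal{A}_\textit{ex}|$ root transitions redirected per state.

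In the general case, I would charge the fused transitions at $q$ to the pair (leaf transition of $\mathcal{A}$, root transition of $\mathcal{B}$). I would then argue that the bound as stated is to be read with $|\mathcal{A}|$ counting $\mathcal{A}$'s leaf transitions weighted by $|\Delta^\mathcal{B}_r|$. Failing that, I would state the slightly weaker bound $|\mathcal{A}|\cdot|\Delta^\mathcal{B}_r| + N_{\text{leaves}}(\mathcal{A})\cdot|\mathcal{B}|$. That bound still yields the linear-in-$L$ result of \Cref{thm:finalSpaceComplexity}, since each component LSTA built there has a single root transition.
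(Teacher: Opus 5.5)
Your construction matches the paper's in outline: one scaled copy of $\mathcal{B}$ per distinct leaf amplitude of $\mathcal{A}$, the copies' root transitions dropped, and a fused level whose choices encode an $\mathcal{A}$-leaf choice paired with a $\mathcal{B}$-root choice. However, your encoding of the pair leaves a genuine hole.

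An LSTA state may carry both internal and leaf transitions; choice disjointness only requires the transitions leaving it to have pairwise disjoint choice sets. Nothing in your construction prevents $\pi(C\times\chof{\delta})$ from meeting the choice set of an \emph{internal} transition of $\mathcal{A}$ at the same state $q$. Your disjointness argument only treats fused versus fused.

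For instance, take $r_\mathcal{A}\xrightarrow{\{1\}}(q,q')$, $q\xrightarrow{\{1\}}a$, $q'\xrightarrow{\{1\}}a$, and $q\xrightarrow{\{\pi(1,1)\}}(p,p)$ (disjoint from $\{1\}$ for Cantor's pairing). Then $\mathcal{L}(\mathcal{A})=\{a\ket{0}+a\ket{1}\}$ contains only $1$-qubit states, because $q'$ has no internal transition. Yet if $\mathcal{B}$ has a root transition on $\{1\}$, your product gives $q$ two internal transitions enabled by $\pi(1,1)$. So the product is not a valid LSTA. Your step ``at $q_n$ the unique enabled fused transition'' also no longer yields a unique transition, since the run may continue inside $\mathcal{A}$. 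With suitable transitions at $p$ and in $\mathcal{B}$, that second run reaches a leaf of a copy at the same depth, destroying the uniqueness needed to induce $\ket{\psi}\otimes\ket{\phi}$.

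The missing idea is the paper's choice of injection $f\colon U^\mathcal{A}_\textit{ex}\times U^\mathcal{B}_\textit{r}\to\mathbb{N}\setminus U^\mathcal{A}_\textit{in}$. Here $U^\mathcal{A}_\textit{in}$ is the finite union of the choice sets of \emph{all} internal transitions of $\mathcal{A}$. This settles the join-versus-internal case at every state.

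That global separation also drives the $\subseteq$ direction, which your proposal does not actually argue. Your remark about paired choices outside the image of $\pi$ is vacuous, since Cantor's pairing is a bijection. More importantly, you take for granted that every accepted product run has the shape $c_1,\dots,c_n,\pi(c^\mathcal{A}_0,c^\mathcal{B}_1),c^\mathcal{B}_2,\dots,c^\mathcal{B}_0$. That is, you assume all branches leave $\mathcal{A}$ at one common level and that this level is $n+1$. The paper obtains this as follows:
\begin{enumerate}
\item A choice in $U^\mathcal{A}_\textit{in}$ enables no join transition, and a choice outside it enables no internal transition of $\mathcal{A}$. Hence all branches cross at a common level $k$.
\item Restoring the original leaf transitions, enabled by the first component of $f^{-1}(c_k)$, turns the upper part into a tree of $\mathcal{A}$. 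The hypothesis that $\mathcal{L}(\mathcal{A})$ has only $n$-qubit states then forces $k=n+1$.
\item The lower part, decoded with the second component, is a tree of $\mathcal{L}(\mathcal{B})$ and contributes exactly $m$ levels.
\end{enumerate}
You never invoke these qubit-count hypotheses, which is a symptom of the missing step.

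Two smaller points. First, do not rename variables apart. Leaf amplitudes are polynomials in $\mathbb{K}[V]$, and a variable occurring in both factors (later tied by a global constraint) must stay shared; the paper simply uses $V_\mathcal{A}\cup V_\mathcal{B}$. Second, the size discussion belongs to \Cref{theorem:Binary Operations}, not to this lemma. Your count $|\Delta^\mathcal{A}_\textit{ex}|\cdot|\Delta^\mathcal{B}_r|$ of interface transitions does apply verbatim to the paper's $\Delta_\textit{join}$.
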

\begin{proof}
Given two LSTAs $\mathcal{A} = \langle Q_\mathcal{A}, V_\mathcal{A}, \Delta_\mathcal{A}, r_\mathcal{A}\rangle$ and $\mathcal{B} = \langle Q_\mathcal{B}, V_\mathcal{B}, \Delta_\mathcal{B}, r_\mathcal{B}\rangle$, we build their tensor product $\mathcal{A} \otimes \mathcal{B}$ such that $\mathcal L(\mathcal{A} \otimes \mathcal{B}) = \mathcal L(\mathcal{A}) \otimes \mathcal L(\mathcal{B})$. Recall that, in the tree view, the tensor product $\ket{\psi} \otimes \ket{\phi}$ essentially replaces each scalar amplitude $a_s$ in $\ket{\psi}$ with the subtree representing $a_s\ket{\phi}$. Thus, the intuition behind this construction is to structurally mimic the algebraic expansion of the tensor product. That is, we replace all leaves of a tree induced by $\mathcal{A}$ with an identical tree induced by $\mathcal{B}$, scaled by the corresponding leaf values. Formally, the construction proceeds in two steps:
\begin{enumerate}[(1)]
    \item \textbf{Replication and Scaling:} For each distinct amplitude $\alpha \in \{ \symof{\delta} \mid \delta \in \Delta^{\mathcal A}_\textit{ex} \}$, we create a distinct, isomorphic scaled copy of $\mathcal{B}$, denoted by $\mathcal{B}_\alpha$.
    In this copy, every state $q \in Q_\mathcal{B}$ is renamed to $q^\alpha$, and every leaf amplitude $v$ is scaled to $\alpha \cdot v$. Specifically, the transition set $\Delta_{\mathcal{B}_\alpha}$ corresponds one-to-one with $\Delta_\mathcal{B}$: internal transitions are structurally identical (from $q \xrightarrow{C} (q_\ell, q_r)$ to $q^\alpha \xrightarrow{C} (q_\ell^\alpha, q_r^\alpha)$), while leaf transitions are scaled (from $q \xrightarrow{C} v$ to $q^\alpha \xrightarrow{C} \alpha \cdot v$).
    \item \textbf{Interface Construction:} The interface transitions $\Delta_{\textit{join}}$ fusing $\mathcal{A}$'s leaves and $\mathcal{B}_\alpha$'s roots are constructed as follows. Let $U^\mathcal{A}_\textit{ex} = \bigcup_{\delta \in \Delta^\mathcal{A}_\textit{ex}} \chof{\delta}$, $U^\mathcal{B}_\textit{r} = \bigcup_{\delta \in \Delta^\mathcal{B}_\textit{r}} \chof{\delta}$, and $U^\mathcal{A}_\textit{in} = \bigcup_{\delta \in \Delta^\mathcal{A}_\textit{in}} \chof{\delta}$ be the choice universes of $\Delta^\mathcal{A}_\textit{ex}$, $\Delta^\mathcal{B}_\textit{r}$, and $\Delta^\mathcal{A}_\textit{in}$, respectively. We fix an injection $f: U^\mathcal{A}_\textit{ex} \times U^\mathcal{B}_\textit{r} \to \mathbb{N} \setminus U^\mathcal{A}_\textit{in}$. Then, for every leaf transition $q \xrightarrow{C_1} \alpha$ of $\mathcal{A}$ and every root transition $r_\mathcal{B} \xrightarrow{C_2} (q_\ell, q_r)$ of $\mathcal{B}$, we add a transition $q \xrightarrow{\{f(a, b)\ \mid\ a \in C_1,\ b \in C_2\}} (q_\ell^\alpha, q_r^\alpha)$ into $\Delta_{\textit{join}}$.
\end{enumerate}
The resulting product LSTA is then defined as $\mathcal{A} \otimes \mathcal{B} \triangleq \langle Q_\mathcal{A} \cup \bigcup_{\alpha} Q_{\mathcal{B}_\alpha}, V_\mathcal A \cup V_\mathcal B, \Delta^\mathcal{A}_\textit{in} \cup \Delta_{\textit{join}} \cup \bigcup_{\alpha} (\Delta_{\mathcal{B}_\alpha} \setminus \Delta^{\mathcal{B}_\alpha}_r), r_\mathcal{A}\rangle$. The injectivity of $f$ ensures that this construction satisfies the choice disjointness condition, as explained in detail below.

\paragraph{Satisfaction of Choice Disjointness Condition.}
To prove that $\mathcal{A} \otimes \mathcal{B}$ is a valid LSTA, we must verify that for any state $q$ in the product and any two distinct transitions $\delta_1, \delta_2$ originating from $q$, their choice sets are disjoint, i.e., $\chof{\delta_1} \cap \chof{\delta_2} = \emptyset$.

First, consider the states that are not affected by the interface construction. For any state $q$ belonging to the scaled copies $\bigcup_{\alpha} Q_{\mathcal{B}_\alpha}$, the transitions originate solely from $\Delta_{\mathcal{B}_\alpha}$. Since $\mathcal{B}$ is a valid LSTA and $\mathcal{B}_\alpha$ is an isomorphic copy, choice disjointness is preserved by isomorphism. Similarly, for any non-interface state $q \in Q_\mathcal{A}$ (i.e., a state in $\mathcal{A}$ that has no external transitions), the set of outgoing transitions remains a subset of $\Delta^\mathcal{A}_\textit{in}$. Thus, the condition holds by inheritance from $\mathcal{A}$.

The critical case arises for an \textbf{interface state} $q \in Q_\mathcal{A}$, which may now possess transitions from both the original $\Delta^\mathcal{A}_\textit{in}$ and the new $\Delta_\textit{join}$. Let $\delta_1, \delta_2$ be two distinct transitions with $\topof{\delta_1} = \topof{\delta_2} = q$. We analyze the two possible violating scenarios:

\begin{itemize}
    \item \textbf{Case 1: Join vs. Internal.} Suppose $\delta_1 \in \Delta_\textit{join}$ and $\delta_2 \in \Delta^\mathcal{A}_\textit{in}$.
    By the construction of the interface, the choice set $\chof{\delta_1}$ of $\delta_1$ is a subset of the codomain of $f$, which is defined as $\mathbb{N} \setminus U^\mathcal{A}_\textit{in}$. In contrast, $\delta_2$ is an original internal transition of $\mathcal{A}$, so its choice set $\chof{\delta_2}$ is a subset of $U^\mathcal{A}_\textit{in}$.
    Since $(\mathbb{N} \setminus U^\mathcal{A}_\textit{in}) \cap U^\mathcal{A}_\textit{in} = \emptyset$, it follows that $\chof{\delta_1} \cap \chof{\delta_2} = \emptyset$.

    \item \textbf{Case 2: Join vs. Join.} Suppose $\delta_1, \delta_2 \in \Delta_\textit{join}$ are both new interface transitions.
    By definition, these transitions are derived from pairs of original transitions. Let $\delta_1$ be derived from the pair $(\delta^\mathcal{A}_1, \delta^\mathcal{B}_1)$ where $\delta^\mathcal{A}_1 = q \xrightarrow{C_1} \alpha$ and $\delta^\mathcal{B}_1 = r_\mathcal{B} \xrightarrow{D_1} (q_{\ell,1}, q_{r,1})$. Similarly, let $\delta_2$ be derived from $(\delta^\mathcal{A}_2, \delta^\mathcal{B}_2)$ with choices $C_2, D_2$ and children $(q_{\ell,2}, q_{r,2})$.
    The choice sets for the new transitions are $f(C_1 \times D_1)$ and $f(C_2 \times D_2)$.
    Since $f$ is injective, the intersection of their image sets is:
    \begin{align*}
        f(C_1 \times D_1) \cap f(C_2 \times D_2) &= f\big( (C_1 \times D_1) \cap (C_2 \times D_2) \big) \\
        &= f\big( (C_1 \cap C_2) \times (D_1 \cap D_2) \big).
    \end{align*}
    Since $\delta_1 \neq \delta_2$, at least one component of the source pairs must differ:
    \begin{enumerate}
        \item If $\delta^\mathcal{A}_1 \neq \delta^\mathcal{A}_2$, then by the validity of $\mathcal{A}$, we have $C_1 \cap C_2 = \emptyset$.
        \item If $\delta^\mathcal{A}_1 = \delta^\mathcal{A}_2$, then we must have $\delta^\mathcal{B}_1 \neq \delta^\mathcal{B}_2$ (otherwise $\delta_1 = \delta_2$). By the validity of $\mathcal{B}$, we have $D_1 \cap D_2 = \emptyset$.
    \end{enumerate}
    In either subcase, the Cartesian product $(C_1 \cap C_2) \times (D_1 \cap D_2)$ is empty, and hence $\chof{\delta_1} \cap \chof{\delta_2} = \emptyset$.
\end{itemize}
Since disjointness holds for all cases, $\mathcal{A} \otimes \mathcal{B}$ satisfies the choice disjointness condition.

\paragraph{Proof of $\mathcal{L}(\mathcal{A} \otimes \mathcal{B}) = \mathcal{L}(\mathcal{A}) \otimes \mathcal{L}(\mathcal{B})$.}
We prove the equality by showing mutual inclusion based on the existence of valid choice sequences satisfying the LSTA conditions.

\textbf{\noindent(Proof of $\supseteq$):}
Let $\ket{\psi} = \sum_{s} a_s \ket{s} \in \mathcal{L}(\mathcal{A})$ and $\ket{\phi} = \sum_{t} b_t \ket{t} \in \mathcal{L}(\mathcal{B})$. We aim to show that the product state $\ket{\Psi} = \ket{\psi} \otimes \ket{\phi}$ belongs to $\mathcal{L}(\mathcal{A} \otimes \mathcal{B})$.
By definition, there exists a choice sequence $c^{\psi}_1, \dots, c^{\psi}_n, c^{\psi}_0$ for $\mathcal{A}$ that induces a tree representing $\ket{\psi}$, and a sequence $c^{\phi}_1, \dots, c^{\phi}_m, c^{\phi}_0$ for $\mathcal{B}$ representing $\ket{\phi}$.

We construct the choice sequence $c^{\Psi}_1, \dots, c^{\Psi}_{n+m}, c^{\Psi}_0$ for $\ket{\Psi}$ as follows:
\begin{itemize}
    \item \textbf{Upper Levels ($1 \le i \le n$):} Set $c^{\Psi}_i = c^{\psi}_i$. These choices guide the path through $\Delta^\mathcal{A}_\textit{in}$ exactly as in $\ket{\psi}$.
    \item \textbf{Interface Level ($i = n+1$):} Set $c^{\Psi}_{n+1} = f(c^{\psi}_0, c^{\phi}_1)$. Since $f$ is an injection, this choice uniquely identifies the transition in $\Delta_\textit{join}$ derived from the leaf transition of $\ket{\psi}$ (choice $c^{\psi}_0$) and the root transition of $\ket{\phi}$ (choice $c^{\phi}_1$).
    \item \textbf{Lower Levels ($n+2 \le i \le n+m$):} Set $c^{\Psi}_i = c^{\phi}_{i-n}$. These choices guide the path through the copies $\mathcal{B}_\alpha$.
    \item \textbf{Leaves ($i = 0$):} Set $c^{\Psi}_0 = c^{\phi}_0$.
\end{itemize}
We now verify the validity of the induced tree for any basis state $\ket{st}$ where $s \in \{0,1\}^n$ and $t \in \{0,1\}^m$.
The sequence $c^{\Psi}_1, \dots, c^{\Psi}_n$ guides the automaton from $r_\mathcal{A}$ to a state $q$ (the leaf of path $s$ in $\mathcal{A}$), which originally had a leaf transition to value $a_s$ via $c^{\psi}_0$.
At level $n+1$, the choice $c^{\Psi}_{n+1}$ selects the unique interface transition, which enters the copy of $\mathcal{B}$ (scaled by $a_s$) corresponding to the first level of path $t$.
The subsequent choices $c^{\Psi}_{n+2}, \dots, c^{\Psi}_{n+m}, c^{\Psi}_0$ then guide the automaton through $\mathcal{B}_{a_s}$ following path $t$ starting from the second level. Since $\mathcal{B}_{a_s}$ is an isomorphic copy of $\mathcal{B}$ scaled by $a_s$, the final leaf value reached is $a_s \cdot b_t$.
Thus, the generated state matches $\sum_{s,t} (a_s \cdot b_t) \ket{st} = \ket{\psi} \otimes \ket{\phi} = \ket{\Psi}$.

\textbf{\noindent(Proof of $\subseteq$):}
Let $\ket{\Psi} \in \mathcal{L}(\mathcal{A} \otimes \mathcal{B})$. By definition, there exists a choice sequence $c^{\Psi}_1, \dots, c^{\Psi}_{\ell}, c^{\Psi}_0$ that induces a valid tree representing $\ket{\Psi}$.

\textbf{First, we verify that any perfect binary tree induced by the transitions of $\mathcal{A} \otimes \mathcal{B}$ must consist of exactly $n+m$ internal levels, thereby representing an $(n+m)$-qubit state (i.e., $\ell = n+m$).} This structure arises because the leaf transitions of the product automaton belong exclusively to the scaled copies $\{\mathcal{B}_\alpha\}$. To reach these leaves, any valid path starting from $r_\mathcal{A}$ must traverse the structure of $\mathcal{A}$ and enter a copy $\mathcal{B}_\alpha$ via the interface $\Delta_\textit{join}$, so at least the upper part of the tree representing $\ket{\Psi}$---which is also a perfect binary tree---is induced solely by transitions in $\Delta^\mathcal{A}_\textit{in}$.

Suppose the selected transitions in $\Delta_\textit{join}$ are enabled by choice $c^{\Psi}_{k}$ at level $k$. Then $c^{\Psi}_{k}$ must be representable as $f(c_\mathcal{A}, c_\mathcal{B})$ for some $c_\mathcal{A} \in U^\mathcal{A}_\textit{ex}$ and $c_\mathcal{B} \in U^\mathcal{B}_\textit{r}$.
Consider the upper part: by replacing the selected interface transitions with the original leaf transitions of $\mathcal{A}$ (enabled by $c_\mathcal{A}$) and combining them with the enabled transitions in $\Delta^{\mathcal A}_\textit{in}$, we recover a perfect binary tree representing a quantum state in $\mathcal{L}(\mathcal{A})$. Since every quantum state in $\mathcal{L}(\mathcal{A})$ has $n$ qubits, the interface must occur at level $k = n+1$.

Similarly, regarding the lower part of the tree, if we replace the selected interface transitions with the original root transitions in $\Delta^{\mathcal{B}}_r$ (enabled by $c_\mathcal{B}$) and combine them with the transitions in $\Delta_\mathcal{B}$ recovered from the enabled transitions in $\{\Delta_{\mathcal{B}_\alpha}\}$, we recover a perfect binary tree representing a quantum state in $\mathcal{L}(\mathcal{B})$. Since every quantum state in $\mathcal{L}(\mathcal{B})$ has $m$ qubits, this lower part contributes exactly $m$ internal levels. Thus, we obtain $\ell = n+m$.

Given $\ell = n+m$, we can safely decompose the sequence $c^{\Psi}_1, \dots, c^{\Psi}_{n+m}, c^{\Psi}_0$ into two parts along with $c_\mathcal A$ and $c_\mathcal B$ introduced previously:
\begin{itemize}
    \item \textbf{Upper Structure:} The sequence $c^{\Psi}_1, \dots, c^{\Psi}_n, c_\mathcal A$ forms a valid choice sequence for $\mathcal{A}$. The choices $c^{\Psi}_1 \dots c^{\Psi}_n$ guide the path through the $n$ internal levels of $\mathcal{A}$. The existence of the interface transitions at level $n+1$, enabled by $f(c_\mathcal A, c_\mathcal B)$ implies that for every path reaching this level, there was originally a leaf transition in $\mathcal{A}$ enabled by $c_\mathcal A$. Let $\ket{\psi} = \sum a_s \ket{s} \in \mathcal L(\mathcal A)$ be the quantum state induced by this choice sequence.
    \item \textbf{Lower Structure:} The sequence $c_\mathcal B, c^{\Psi}_{n+2}, \dots, c^{\Psi}_{n+m}, c^{\Psi}_0$ forms a valid choice sequence for $\mathcal{B}$. Since all copies $\mathcal{B}_\alpha$ are isomorphic, these choices induce the same structural path for the remaining $m$ levels for every branch $s$. Let $\ket{\phi} = \sum b_t \ket{t} \in \mathcal L(\mathcal B)$ be the quantum state induced by this choice sequence.
\end{itemize}
In the product tree, every branch $s$ (reaching amplitude $a_s$ in $\mathcal{A}$) will definitely enter the scaled copy $\mathcal{B}_{a_s}$. The lower choices then traverse $\mathcal{B}_{a_s}$ to reach the leaf for $t$. Due to the scaling factor $a_s$, the value at leaf $st$ is $a_s \cdot b_t$.
Therefore, $\ket{\Psi} = \sum_{s,t} (a_s b_t) \ket{st} = \ket{\psi} \otimes \ket{\phi} \in \mathcal L(\mathcal A) \otimes \mathcal L(\mathcal B)$.
\qed
\end{proof}

\binaryOperationComplexity* % 加星號 * 代表不要重新編號
\hypertarget{pfBinaryOperationComplexity}{
\begin{proof}
The proof follows from \Cref{lem:unionSemantics} and \Cref{lem:tensorSemantics}.
\qed
\end{proof}}

\subsection{Space Complexity of LSTA}\label{app:space}
We first give the proof of \Cref{thm:atomicSpaceComplexity}, stating the space complexity for exactly one quantum state, which is the basic building block for deriving the final space complexity.

\atomicSpaceComplexity*
\hypertarget{pfAtomicSpaceComplexity}{\begin{proof}
The LSTA construction consists of $n$ internal levels and one final leaf level, for a total of $n+1$ levels. At each level (whether internal or leaf), the number of transitions is bounded by the number of nonzero-amplitude paths, which is at most $N$, plus one sink transition for zero-amplitude paths.
Thus, the total number of transitions is bounded by $(N+1)(n+1)$.
Expanding this term yields $Nn + N + n + 1$, and since $N \cdot n$ is the dominant term, the asymptotic size complexity is $|\Delta| = O(N \cdot n)$.
\qed
\end{proof}}

We then give the proof of \Cref{thm:finalSpaceComplexity}, stating the final construction complexity.

\finalSpaceComplexity*
\hypertarget{pfFinalSpaceComplexity}{}
\begin{proof}
The size analysis proceeds by examining the construction of individual $\setq$ constructs (specifically, atomic components containing a single Dirac notation) and their subsequent assembly via set union and tensor product operations.

\paragraph{Complexity of Composition Operations.}
Recall the bounds established in \Cref{theorem:Binary Operations}:
\begin{itemize}
    \item \hypertarget{set union}{\textbf{Set Union ($\cup$):}} The union operation merges transition sets and adjusts root transitions. It is strictly additive: if $\mathcal{A} = \mathcal{A}_1 \cup \mathcal{A}_2$, then $|\mathcal{A}| \le |\mathcal{A}_1| + |\mathcal{A}_2|$. This operation introduces no structural blow-up.
    \item \hypertarget{tensor product}{\textbf{Tensor Product ($\otimes$):}} The tensor product $\mathcal{A} \otimes \mathcal{B}$ grafts a scaled copy of $\mathcal{B}$ onto each distinct leaf of $\mathcal{A}$. Its size is bounded by $|\mathcal{A} \otimes \mathcal{B}| \le |\mathcal{A}| + N_{\text{leaves}}(\mathcal{A}) \cdot |\mathcal{B}|$, where $N_{\text{leaves}}(\mathcal{A})$ denotes the count of distinct amplitudes at the leaves of $\mathcal{A}$.
\end{itemize}

\paragraph{Variable-Level Transformation.}
Note that variable-level reordering is an optional step. While variable dependencies may restrict its application, successful reordering strictly reduces the effective constraint count $N_{\text{vc}}$ without altering the structural form of the complexity bound derived below. Our analysis assumes the worst-case scenario where no variable-level reordering is performed, ensuring that the derived bound remains valid regardless of whether this step is applied.

\paragraph{Qubit-Level Transformation.} 
Consider the construction of an \emph{atomic} $\setV$ component (corresponding to a single Dirac notation). Let $N_{\text{slot}}$ be the number of (variable-level) slots involved, and $L_\text{slot}$ be the number of qubits per slot. We analyze its transformation into a tensor product structure of $L_\text{slot}$ $\setQ$ constructs, built qubitwise. We introduce auxiliary parameters to distinguish constraint sources: let $N_v^S$ and $N_{\text{ineq}}^S$ denote the counts of iterating variables and inequality constraints in the set predicate, respectively. Similarly, let $N_v^T$ and $N_{\text{ineq}}^T$ denote those under the local summation within a term.

For the $k$-th qubit-level $\setQ$ construct ($1 \le k \le L_\text{slot}$), the construction involves:
\begin{enumerate}[(1)]
    \item $2^{N_v^S}$ quantum states derived from set predicate variables;
    \item An expansion of each state into $2^{N_v^T} \cdot N_{\text{term}}$ terms (accounting for local summations);
    \item Each state occupies $N_{\text{slot}}$ qubits.
\end{enumerate}
Consequently, the size of each base $\setV$ component $\mathcal{A}_k$ is bounded by $O(2^{N_v^S + N_v^T} \cdot N_{\text{term}} \cdot N_{\text{slot}})$ by \hyperlink{set union}{set union} ($2^{N_v^S}$) and \Cref{thm:atomicSpaceComplexity} ($2^{N_v^T} N_{\text{term}} \cdot N_{\text{slot}}$).

The final LSTA for this atomic component is the tensor product of these $L_\text{slot}$ parts: $\bigotimes_{k=1}^{L_\text{slot}} \mathcal{A}_k$.
During this composition, the number of \hyperlink{def:valuation-dependent amplitudes}{valuation-dependent amplitudes} is bounded by $W = 2^{(2^{N_{\text{ineq}}^S + N_{\text{ineq}}^T}) \cdot N_{\text{term}}}$.
Applying the \hyperlink{tensor product}{tensor product} bound iteratively, the total size is the sum of $\setQ$ sizes scaled by $W$:
\[
    |\mathcal{A}_{\text{atomic}}| = O\left( W \cdot \sum_{k=1}^{L_\text{slot}} |\mathcal{A}_k| \right) 
    = O\left( \underbrace{2^{(2^{N_{\text{ineq}}^S + N_{\text{ineq}}^T}) \cdot N_{\text{term}}}}_{W} \cdot L_\text{slot} \cdot \underbrace{2^{N_v^S + N_v^T} \cdot N_{\text{term}} \cdot N_{\text{slot}}}_{|\mathcal{A}_k|} \right).
\]
Substituting the size relation $L_V = N_{\text{slot}} \times L_\text{slot}$ and incorporating the valuation-dependent factor $W$, we obtain:
\[
    |\mathcal{A}_{\text{atomic}}| = O\left( \underbrace{2^{N_{\text{term}} \cdot 2^{N_{\text{ineq}}}}}_{W} \cdot 2^{N_v} \cdot N_{\text{term}} \cdot L_V \right),
\]
where $N_{\text{ineq}} = N_{\text{ineq}}^S + N_{\text{ineq}}^T$ and $N_v = N_v^S + N_v^T$.

\paragraph{Set Union Operations.}
We now proceed to resolve the remaining set union and tensor product operators in the assertion. We resolve set union operators first. Recall that in the \hyperlink{Canonicalization}{canonicalization} step (\Cref{sec:preprocessing}), a $\setq$ construct containing multiple $\dirac$ notations is expanded into a union of atomic $\setq$ constructs. The parameter $N_\text{union}$ ensures that each $\uset$ construct consists of at most $N_\text{union}$ such atomic components.
Since the LSTA \hyperlink{set union}{set union} operation is strictly additive, the size of a full $\uset$ construct is bounded by the sum of its $N_\text{union}$ atomic components. This introduces a linear scaling factor $2N_\text{union}$ (accounting for both separate constructions and the subsequent union operations), leading to the complexity:
\[
    |\mathcal{A}_{\uset}| = O\left( \underbrace{2^{N_{\text{term}} \cdot 2^{N_{\text{ineq}}}}}_{W} \cdot 2^{N_v} \cdot N_{\text{term}} \cdot N_\text{union} \cdot L_V \right).
\]

\paragraph{Tensor Product Operations.} 
The final phase resolves the remaining global tensor products. Since tensor product operations accumulate the number of qubits from operands ($L = \sum_V L_V$) and introduce the multiplier $N_{\text{amp}}$ (the number of symbolic amplitudes), the final complexity is:
\[
    |\mathcal{A}_{\text{final}}| = O\left( 2^{N_{\text{term}} \cdot 2^{N_{\text{ineq}}}} \cdot 2^{N_v} \cdot N_{\text{term}} \cdot N_\text{union} \cdot L \cdot N_{\text{amp}} \right).
\]

\paragraph{Overall Size Complexity.} 
To present a unified bound, we recall that $N_{\text{vc}}$ represents the cumulative constraint complexity, encompassing both $N_{\text{ineq}}$ and $N_v$ (i.e., $N_{\text{ineq}}, N_v \le N_{\text{vc}}$).
Substituting these into the derived bound, the final complexity simplifies to:
\[
    |\mathcal{A}_{\text{final}}| = O\left( 2^{N_{\text{term}} \cdot 2^{N_{\text{vc}}}} \cdot 2^{N_{\text{vc}}} \cdot N_{\text{term}} \cdot N_{\text{union}} \cdot L \cdot N_{\text{amp}} \right).
\]
This formulation confirms that the LSTA construction complexity remains \emph{linear} in the total number of qubits $L$, given that the other system parameters are fixed.
\qed
\end{proof}

\subsection{Experimental Settings}\label{app:exp}
We present the related experimental settings in this section, detailing the specific benchmarks and their corresponding formal specifications. The formal preconditions (initial states) and postconditions (expected output states) for verifying input-output correctness are systematically summarized below.

\begin{table}\caption{Experiment preconditions and postconditions. Notation $b^m$ denotes $m$ repetitions of bit $b$. For \mctoffolibench, ``flip'' (resp. ``$\neg$flip'') indicates that all control qubits are 1 (resp. not all 1), and $t$ denotes the initial target qubit value.}\label{table:prepost}
    \centering
    \scalebox{0.98}{
    \begin{tabular}{|c|c|c|}\hline
        \textbf{Benchmark} & \textbf{Precondition} & \textbf{Postcondition} \\\hline
        \bvmultbench 
        & $\set{\ \ket{s0^n0} : |s|=n\ }$ 
        & $\set{\ \ket{ss0} : |s|=n\ }$ 
        \\\hline
        \ghzmultbench
        & $\set{\ \ket i : |i|=n\ }$ 
        & $\displaystyle\set{\begin{aligned}
            &\tfrac{1}{\sqrt2}\ket{0i}+\tfrac{1}{\sqrt2}\ket{1\bar{i}},\\
            &\tfrac{1}{\sqrt2}\ket{0i}-\tfrac{1}{\sqrt2}\ket{1\bar{i}} 
          \end{aligned}\ :\ |i|=n\ }$ 
        \\\hline
        \grovermultbench
        & $\set{\ \ket{s0^n0^{n-2}0} : |s|=n\ }$ 
        & $\displaystyle
          \begin{aligned}
             \bigcup_{\substack{\text{imag}(a_h)=0,\\|a_h|^2 > 7/8}} \{\ &a_h\ket{ss0^{n-2}1} \\
            +\ &a_\ell\sum_{\substack{i\ne s}}\ket{si0^{n-2}1} : |s|=n\ \}
          \end{aligned}$
        \\\hline
        \oegroverbench 
        & $\displaystyle
          \begin{aligned}
             \bigcup_{\substack{\text{imag}(a_h)=0,\\\text{real}(a_h)>0,\\\text{imag}(a_\ell)=0,\\\text{real}(a_\ell)>0,\\7a_\ell > a_h}} \{\ &a_h\ket{ss0^{n-2}1} \\
            +\ &a_\ell\sum_{\substack{i\ne s}}\ket{si0^{n-2}1} : |s|=n\ \}
          \end{aligned}$
        & $\displaystyle
          \begin{aligned}
             \bigcup_{\substack{\text{imag}(a'_h)=0,\\\text{imag}(a'_\ell)=0,\\|a'_h| > |a_h|}} \{\ &a'_h\ket{ss0^{n-2}1} \\
            +\ &a'_\ell\sum_{\substack{i\ne s}}\ket{si0^{n-2}1} : |s|=n\ \}
          \end{aligned}$
        \\\hline
        \makecell{\mctoffolibench\\($\neg\text{flip},\ t=0$)}
        & $\set{\ \ket{i0^{n-1}0} : i\ne 1^{n}\ }$ 
        & $\set{\ \ket{i0^{n-1}0} : i\ne 1^{n}\ }$
        \\\hline
        \makecell{\mctoffolibench\\($\neg\text{flip},\ t=1$)}
        & $\set{\ \ket{i0^{n-1}1} : i\ne 1^{n}\ }$ 
        & $\set{\ \ket{i0^{n-1}1} : i\ne 1^{n}\ }$
        \\\hline
        \makecell{\mctoffolibench\\($\text{flip},\ t=0$)}
        & $\set{\ \ket{1^n0^{n-1}0}\ }$ 
        & $\set{\ \ket{1^n0^{n-1}1}\ }$
        \\\hline
        \makecell{\mctoffolibench\\($\text{flip},\ t=1$)}
        & $\set{\ \ket{1^n0^{n-1}1}\ }$ 
        & $\set{\ \ket{1^n0^{n-1}0}\ }$
        \\\hline
    \end{tabular}
    }
\end{table}

\end{document}